\renewcommand{\lll}{\log\log\log n}
\newcommand{\cm}{\mathcal{CONGEST}}
\newcommand{\Ghat}{\hat{G}} 
\newcommand{\ehat}{\hat{E}}
\newcommand{\ct}{\mathcal{T}}
\newcommand{\cthat}{\hat{\mathcal{T}}}
\newcommand{\lra}{Lenzen's routing protocol}
\newcommand{\aspect}{the growth-bounded property}
\DeclareMathOperator{\poly}{poly}
\DeclareMathOperator{\degree}{degree}
\DeclareMathOperator*{\E}{\mathbf{E}}
\DeclareMathAlphabet{\mathpzc}{OT1}{pzc}{m}{it}
\newcommand{\ehh}{\mathpzc{h}}
\newcommand*{\union}{\mathop{\cup}}
\begin{document}

\title{Near-Constant-Time Distributed Algorithms on a Congested Clique\thanks{This work is supported in part by National Science Foundation grant CCF-1318166.}}
\author{James W. Hegeman \and Sriram V. Pemmaraju \and Vivek B. Sardeshmukh}
\institute{Department of Computer Science, The University of Iowa, Iowa City, IA 52242 
\email{\{james-hegeman, sriram-pemmaraju, vivek-sardeshmukh\}@uiowa.edu}}

\maketitle

\begin{abstract}
This paper presents constant-time and near-constant-time distributed algorithms for a variety of
problems in the congested clique model.
We show how to compute a 2-ruling set in $O(\log \log \log n)$ rounds with high probability and using this, 
we obtain a constant-approximation
to metric facility location, also in $O(\log \log \log n)$ rounds with high probability. 
In addition, assuming
an input metric space of constant doubling dimension, we obtain constant-round algorithms to compute
constant-factor approximations to the minimum spanning tree and the metric facility location
problems. These results significantly improve on the running time of the fastest known algorithms for
these problems in the congested clique setting. 
\end{abstract}

\section{Introduction}

The $\mathcal{CONGEST}$ model is a synchronous, message-passing model of distributed
computation in which the amount of information that a node can transmit along an incident communication link
in one round is restricted to $O(\log n)$ bits, where $n$ is the size of the network \cite{peleg2000distributed}.
As the name suggests, the $\mathcal{CONGEST}$ model focuses on \textit{congestion} as an obstacle to distributed computation.
In this paper, we focus on the design of distributed algorithms in the $\mathcal{CONGEST}$ model
on a \textit{clique} communication network; we call this the \textit{congested clique} model.
In the congested clique model, all information is nearby, i.e., at most one hop away,
and so any difficulty in solving a problem is due to congestion alone.

Let $H = (V, E_H)$ denote the underlying clique communication network.
In general, the input to the problems we consider consists of a $|V| \times |V|$ matrix $M$ of edge-attributes and a length-$|V|$ vector of node attributes.
$M$ represents edge weights (or distances, or costs) and it is initially distributed among the nodes in $V$ in such a way that each node $v \in V$ knows the corresponding
row and column of $M$.
In one typical example, $M$ could simply be the adjacency matrix of a spanning subgraph $G = (V, E)$ of 
$H$; in this setting, each node $v \in V$ initially knows all the edges of $G$ incident on it.
A number of classical problems in
distributed computing, e.g., maximal independent set (MIS), vertex coloring, edge coloring, maximal matching, shortest paths, etc., are well-defined in this setting. However, the difficulty of proving lower
bounds in the congested clique model \cite{DruckerKuhnOshmanPODC2014} means that it is not clear how quickly one should be 
able to solve any of these problems in this model. Note that the input $G$ can be quite dense (e.g., have 
$\Theta(n^2)$ edges)
and therefore any reasonably fast algorithm for the problem will have to be ``truly'' distributed in the sense
that it cannot simply rely on shipping off the problem description to a single node for local computation.
In this setting, the algorithm of Berns et al.~\cite{berns2012arxiv,berns2012facloc} that computes a \textit{2-ruling set} of $G$ in expected-$O(\log \log n)$ rounds is worth mentioning. 
(A \textit{t-ruling set} is defined to be an independent set $I \subseteq V$ such that every
node in $V$ is at most $t$ hops in $G$ from some node in $I$.) 
In another important class of problems that
we study, the input matrix $M$ represents a metric space $(V, d)$; thus each node $v \in V$ 
initially has knowledge of distances $d(v, w)$ for all $w \in V$. 
Nodes then need to collaborate to solve a problem such as \textit{minimum spanning tree} (MST) or 
\textit{metric facility location} (MFL) that are defined on the input metric space.
In this setting, the deterministic MST algorithm of Lotker et al.~\cite{lotker2006distributed} running in $O(\log\log n )$ 
rounds is worth mentioning.

Thus far the congested clique model has mainly served the theoretical purpose
of helping us understand the role of congestion as an obstacle to
distributed computation.
However, recent papers \cite{KlauckArxiv2013,HegemanPemmarajuSIROCCO2014}
have made connections between congested clique algorithms and
algorithms in popular systems of parallel computing such as MapReduce \cite{DeanGhemavat} and graph processing systems 
such as Pregel \cite{MalewiczPregelSIGMOD2010}, thus providing a practical motivation
for the development of fast algorithms on the congested clique.
Specifically, in \cite{HegemanPemmarajuSIROCCO2014}, it is shown that congested clique algorithms 
with fairly liberal resource constraints can be efficiently simulated 
in a MapReduce model of computation \cite{KarloffSuriVassilvitskii}.

\subsection{Main Results}
\label{section:mainResults}

In this paper we present several constant-time or near-constant-time algorithms for fundamental problems in the congested clique setting.

\begin{itemize}

\item First, we present an algorithm that computes a 2-ruling set of $G$
	in $O(\log \log \log n)$ rounds with high probability (in short, \textit{w.h.p.}, referring to probability at least $1 - 1/n^{c}$ for a constant $c \ge 1$), 
significantly improving the running time of the 2-ruling set
algorithm of Berns et al.~\cite{berns2012arxiv,berns2012facloc}.

\item Via a reduction presented in Berns et al.~\cite{berns2012arxiv,berns2012facloc}, this implies an $O(\log \log \log n)$-round algorithm
for computing an $O(1)$-approximation for MFL w.h.p. Again, this
significantly improves on the running time of the fastest known algorithm for this problem.
\end{itemize}

Distributed algorithms that run in $O(\log \log n)$ rounds are typically analyzed by showing a doubly-exponential rate of progress; such progress, for example, is achieved if the number of nodes that have ``successfully finished'' grows by squaring after each iteration. 
The congested clique algorithms for MST due to Lotker et al.~\cite{lotker2006distributed} and the above-mentioned MFL algorithm due to Berns et al.~\cite{berns2012arxiv,berns2012facloc} are both examples of such phenomena. Our algorithm with triply-logarithmic running time, involves new techniques that seem applicable to congested clique algorithms in general. Our result raises the distinct possibility that other problems, e.g., MST, can also be solved in $O(\log \log \log n)$ rounds on a congested clique.
In fact, our next set of results represents progress in this direction.

\begin{itemize}

\item We show how to solve the MIS problem on a congested clique in \textit{constant} rounds on an input graph $G_r$ 
induced by the metric space $(V, d)$ in which every pair of nodes at distance at most
$r$ (for any $r \ge 0$) are connected by an edge. This result has two implications.

\item First, given a metric space $(V, d)$ of constant doubling dimension, we show that a constant-approximation to
the MST problem on this metric space can be obtained in \textit{constant} rounds on a congested clique
setting.

\item An additional implication of the aforementioned MIS result is that it leads to a \textit{constant}-round constant-approximation to MFL in metric spaces of constant doubling dimension on a congested clique.
\end{itemize}

In order to achieve our results, we use a variety of techniques that balance bandwidth constraints with
the need to make rapid progress. We believe that our techniques will have independent utility in any
distributed setting in which congestion is a bottleneck.

\subsection{Technical Preliminaries}
\label{subsection:techPrelim}

\paragraph{Congested Clique Model.} The underlying communication network is a clique $H = (V, E_H)$ of size $n = |V|$.
Computation proceeds in synchronous rounds and in each round a node (i) receives all messages sent to it in the previous round, 
(ii) performs unlimited local computation, and then (iii) sends a, possibly 
different, message of size $O(\log n)$ to each of the 
other nodes in the network.
We assume that nodes have distinct IDs that can each be represented in $O(\log n)$ bits.

\paragraph{MST and MFL problems.} We assume that the input to the MST problem is a
metric space $(V, d)$. Initially, each node $v \in V$ knows
distances $d(v, w)$ to all nodes $w \in V$. When the algorithm ends, all nodes in $V$ 
are required to know a spanning tree $T$ of $V$ of minimum weight.
(Note that here we take $d(u, v)$ to be the ``weight'' of edge $\{u, v\}$.)
The input to MFL consists of a metric space $(V, d)$ 
along with \textit{facility opening costs} $f_v$ associated with
each node $v \in V$.
The goal is to find a subset $F \subseteq V$ of nodes to
\textit{open} as facilities so as to minimize the facility opening costs plus connection
costs, i.e., $\sum_{v \in F} f_v + \sum_{u \in V} D(u, F)$,
where $D(u, F) := \min_{v \in F} d(u, v)$ is the \textit{connection
cost} of node $u$.
Initially, each node $v \in V$ knows facility opening cost $f_v$ and distances $d(v, w)$ for
all $w \in V$.
Facility location is a well-studied problem in operations research
\cite{Balinski66,CNWBook,EHK77} that arises in
contexts such as locating hospitals in a city or locating distribution centers
in a region.
More recently, the facility location problem has been used as an abstraction for
the problem of locating resources in a wireless network
\cite{FrankBook,PanditPemmarajuICDCN09} and motivated by this application several distributed approximation algorithms for this problem have been
designed \cite{MoscibrodaFLPODC05,GehweilerSPAA2006,HegemanPemmarajuDISC2013}.

\paragraph{$t$-ruling set problem.}
A \textit{$t$-ruling set} of a graph $G = (V, E)$ is an independent set $I \subseteq V$ such that every vertex in $G$ is at most
$t$ hops from some vertex in $I$.
A $t$-ruling set, for constant $t$, is a natural generalization of an MIS and can stand as a proxy for an MIS in many instances.
The input to the $t$-ruling set problem on a congested clique $H = (V, E_H)$ is a spanning subgraph 
$G = (V, E)$ of the underlying communication network $H$.
Each node $v \in V$ is initially aware of all its neighbors in $G$.
At the end of the $t$-ruling set algorithm, every node is required to know the identities of 
all nodes in the computed $t$-ruling set.

\paragraph{Metric spaces, doubling dimension, and growth-bounded graphs.}
If $M = (V, d)$ is a metric space then we use $B_M(v, r)$ to denote the set of points $w \in V$ such that
$d(v, w) \le r$.
We call $B_M(v, r)$ the \textit{ball of radius $r$ centered at $v$}.
A metric space $M = (V, d)$ has \textit{doubling dimension} $\rho$ if for any $v \in V$ and $r \ge 0$,
$B_M(v, r)$ is contained in the union of at most $2^\rho$ balls $B_M(u, r/2)$, $u \in V$. 
In this paper, we work with metric spaces with constant doubling dimension, i.e., $\rho = O(1)$.
Note that constant-dimensional Euclidean metric spaces are natural examples of metric spaces with constant doubling dimension.
In distributed computing literature, metric spaces of constant doubling dimension have
been investigated in the context of wireless networks \cite{damian2006Spanner,KuhnMoscibrodaWattenhoferPODC2005}.
For a graph $G = (V, E)$ and a node $v \in V$, let $B_G(v, r)$ denote the set of all vertices $u \in V$ that are at most $r$ hops from $v$. 
A graph $G = (V, E)$ is said to have \textit{bounded growth} (or said to be \textit{growth-bounded}) if the size of any independent set in any ball
$B_G(v, r)$, $v \in V$, $r \ge 0$, is bounded by $O(r^c)$ for some constant $c$.
For any metric space $(V, d)$ and $r \ge 0$, the graph $G_{r} = (V, E_r)$, where 
$E_r = \{\{u, v\} \in d(u, v) \le r\}$
is called a \textit{distance-threshold graph}.
It is easy to see that if $(V,d)$ has constant doubling dimension then a distance-threshold graph
$G_{r}$, for any $r \ge 0$, is growth-bounded; this fact will play an important role in our algorithms.
For a given metric space $(V, d)$ the \emph{aspect ratio} $\lambda(Y)$ of a subset of points $Y \subseteq V$ is the ratio of maximum of pair-wise distance between points in $Y$ to the minimum of pair-wise distance between points in $Y$, i.e. 
 $\lambda(Y) = {\max\{d(u,v) \mid u, v\in Y\}}/{\min\{d(u,v)\mid u, v \in Y\}}$.
The following fact is easy to prove by applying the definition of doubling dimension: 
if $(V, d)$ is a metric with doubling dimension $\rho$ and $Y\subseteq V$ is a subset of points, then $|Y| \leq 2^{\rho\cdot\lceil \log_2 \lambda(Y) \rceil}$ where $\lambda(Y)$ is the 
aspect ratio of $Y$.
We refer to this property as the \textit{growth-bounded property} of the metric space $(V, d)$.
Distance-threshold graphs and more generally, growth-bounded graphs have 
attracted attention in the 
distributed computing community as flexible models of wireless networks \cite{KuhnMoscibrodaWattenhoferPODC2005}.
Schneider and Wattenhofer \cite{schneider2008logstar} present a deterministic algorithm,
running in $O(\log^* n)$ rounds, for computing an MIS on a growth-bounded graph.

\paragraph{Lenzen's routing protocol.}
A key algorithmic tool that allows us to design constant- and near-constant-time round 
algorithms is a recent deterministic routing protocol by Lenzen 
\cite{lenzen2013routing} that disseminates a large volume of information 
on a congested clique in constant rounds.
The specific routing problem, called an \textit{Information Distribution Task},
solved by Lenzen's protocol is the following.
Each node $i \in V$ is given a set of $n' \le n$ messages, each of size $O(\log n)$, $\{m_i^1, m_i^2, \ldots, m_i^{n'}\}$,
with destinations $d(m_i^j) \in V$, $j \in [n']$.
Messages are globally lexicographically ordered by their source $i$, destination $d(m_i^j)$, and $j$.
Each node is also the destination of at most $n$ messages.
Lenzen's routing protocol solves the Information Distribution Task in $O(1)$ rounds.

\paragraph{General Notation.} 
For a subset $S \subseteq V$, $G[S]$ denotes \textit{induced} subgraph of $G$ by set $S$; 
thus $G[S] = (S, E')$ where $E' = \{\{u, v\} \mid u, v \in S \mbox{ and } \{u, v\} \in E\}$.
In the context of our MST algorithm we will interpret metric distances $d(u,v)$ as as edge weights;
we will use $wt(u, v)$ and $d(u, v)$ interchangeably.
Given an edge-weighted graph $G = (V, E)$ and an edge set $E' \subseteq E$, 
we denote the sum of all edge-weights in $E'$ as $wt(E')$. 
We use $\Delta$ to denote the maximum degree of a graph; sometimes, to avoid 
ambiguity we use $\Delta(G)$ to denote maximum degree of graph $G$.
All logarithms are assumed to have base 2 unless otherwise specified.
We say an event occurs \textit{with high probability} (w.h.p.), if the probability of that event is at least $(1-1/n^c)$ for a constant $c \ge 1$.

\section{2-Ruling Sets in \texorpdfstring{$O(\log \log \log n)$}{O(log log log n)} Rounds}
\label{sec:ruling}
In this section, we show how nodes in $V$ can use the underlying clique
communication network $H$ to compute, in $O(\log \log \log n)$ rounds w.h.p,
a $2$-ruling set of an arbitrary spanning subgraph $G$ of $H$. 
At a high level, our $2$-ruling set algorithm can be viewed as having four steps. 
In the first step, the graph is decomposed into $O(\log \log n)$ degree-based classes and at the end of this step every node knows the class it belongs to. 
In the next subsection, we describe this \textit{degree-decomposition step} and show that it runs deterministically in $O(\log \log \log n)$ rounds.
In the second step, each vertex $v$ of the given graph $G$ joins a set $S$ independently with probability $p_v$,
where $p_v$ depends on $v$'s class as defined in the degree-decomposition step.  
This \textit{vertex-selection step} yields a set $S$ of nodes that will be shown to have two
properties: (i) w.h.p.~the number of edges in the induced subgraph $G[S]$ is
$O(n \cdot \poly(\log n))$; and (ii) w.h.p~, every vertex
in $G$ is either in $S$ or has a neighbor in $S$. 
Given the degree-decomposition, the vertex-selection step is elementary and requires no communication.
In the third step, we work with $G[S]$ and run a \textit{greedy randomized MIS} algorithm, partially
on $G[S]$.
We show that this step can be implemented in just $O(1)$ rounds in the congested clique.
Furthermore, we show that in this step we compute an independent set $I \subseteq S$ of $G[S]$,
such that the set of nodes $R := S \setminus (I \cup N(I))$ that still need to be
processed, induces a subgraph $G[R]$ with maximum degree $O(\poly(\log n))$ w.h.p.
(Here $N(I)$ refers to the union of the neighborhoods in $G[S]$ of 
nodes in $I$.)
In the fourth and final step, we compute an MIS of $G[R]$ using the 
congested clique MIS algorithm of Ghaffari \cite{GhaffariPODC17}
that computes an MIS of a graph in $O(\log \log \Delta)$ rounds, provided the maximum 
degree $\Delta$ is small enough.
Putting these four steps together yields a 2-ruling set algorithm that runs in $O(\log \log \log n)$ w.h.p.

\subsection{Degree-Decomposition Step}
\label{sub:degree-decomposition}
 
Let $G = (V, E)$ be an arbitrary graph. 
Let $U_1$ be the set of all nodes in $G$ with degrees in the range $[n^{1/2}, n)$.
Let $V_1$ be the remaining nodes, i.e., $V_1 = V \setminus U_1$.
Let $U_2$ be the set of all nodes in $V_1$ with degrees in $G[V_1]$ belonging
to the range $[n^{1/4}, n^{1/2})$.
The decomposition continues in this manner until $V$ is partitioned into sets 
$U_1, U_2, \ldots$.
We now provide a more formal description.
For $k = 0, 1, 2, \ldots$, let
$D_k = n^{1 / 2^k}$. The $D_k$'s will serve as degree thresholds and will lead
to a vertex partition. Let $k^* = \lceil \log \log n \rceil$. Note that
$1 < D_{k^*} \leq 2$. Let $V_0 = V$, $G_0 = G$, and
$U_1 = \{v \in V_0 \mid \degree_{G_0}(v) \in [D_1, D_0)\}$. For
$1 \leq k < k^*$, let \[V_k = V_{k-1} \setminus U_k, \qquad G_k = G[V_k],
  \qquad U_{k+1} = \{v \in V_k \mid \degree_{G_k}(v) \in [D_{k+1}, D_k)\}\]
Let $V_{k^*} = V_{k^*-1} \setminus U_{k^*}$, $G_{k^*} = G[V_{k^*}]$, and
  $U_{k^*+1} = V_{k^*}$. 
See Figure~\ref{fig:degree-decomposition} for an illustration of this decomposition.   
Let $N_G(v)$ denote the set of neighbors of vertex $v$ in
  graph $G$. Here are some easy observations:
  \begin{itemize}
    \item[(i)] For $0 \leq k \leq k^*$, $\Delta(G_k) < D_k$.

    \item[(ii)] For $1 \leq k \leq k^*+1$, if $v \in U_k$ then
      $|N_G(v) \cap V_{k-1}| < D_{k-1}$.

    \item[(iii)] For $1 \leq k \leq k^*+1$, if $v \in U_k$ then
      $|N_G(v) \cap U_j| < D_j$ for $j = 1, 2, \ldots k-1$.
  \end{itemize}
\begin{figure}
\begin{boxedminipage}{\textwidth}
\centering
\includegraphics[width=0.9\textwidth]{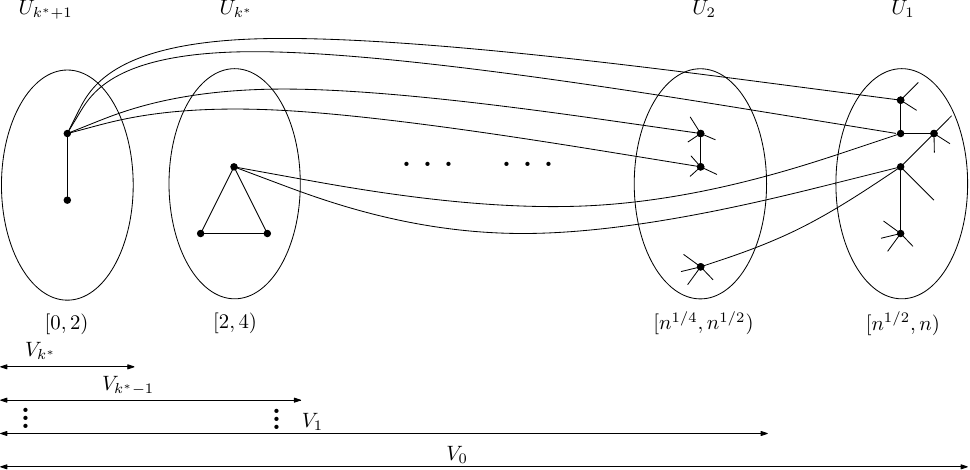} 
\end{boxedminipage}
\caption{Degree-Decomposition Step. 
$U_1$ is the set of all nodes in $G$ with degrees in the range $[n^{1/2}, n)$ and 
$V_1$ is the remaining nodes.
$U_2$ is the set of all nodes in $V_1$ with degrees in $G[V_1]$ belonging
to the range $[n^{1/4}, n^{1/2})$.
The decomposition continues in this manner until all nodes belong to some $U_k$.
We use $k^*$ to denote $\lceil \log\log n\rceil$. Assuming that $\log\log n = k^*$,
we see that $U_k^*$ is the set of nodes that have degree in $G[V_{k^*-1}]$ in the range $[2, 4)$.
Note that a node $v$ that belongs to $U_{k+1}$ could have degree in $G$ that is much
larger than $D_k = n^{1/2^k}$.
\label{fig:degree-decomposition}}
\end{figure}

Now we describe algorithm to compute this degree-decomposition; in particular,
we precisely describe how each node $v$ computes an index $k(v) \in [k^* + 1]$ such that
$v \in U_{k(v)}$.
Below, we first describe at a high level a 2-phase approach that we use to compute the index
$k(v)$ for each vertex $v$. 
Subsequently we will flesh out our approach with necessary details and show that it is correct and can be implemented in $O(\log \log \log n)$ rounds on a congested clique.

\begin{description}
  \item[Lazy phase:] Let $t = \lceil 1 + \log \log \log n \rceil$. The sets
    $U_1, U_2, \ldots, U_t$ are identified in a leisurely manner, one-by-one, in
    $O(\log \log \log n)$ rounds. At the end of this phase each vertex
    $v \in \cup_{i=1}^t U_i$ knows the index $k(v) \in [t]$ such that
    $v \in U_{k(v)}$.

  \item[Speedy phase:] The set of remaining vertices, namely $V_t$, induces a
    graph $G_t$ whose maximum degree is less than
    \[D_t \leq n^{1 / 2^{1 + \log \log \log n}} = n^{1 / (2 \log \log n)}.\]
    This upper bound on the maximum degree helps us compute the index values $k(v)$
    for the remaining vertices at a faster rate. 
    We first show that each vertex $v$
    in $G_t$ can acquire knowledge of the graph induced by the ball $B_{G_t}(v,k^*)$
    in $O(\log \log \log n)$ rounds via a fast \textit{ball-growing algorithm}. (Recall that
    $k^* = \lceil \log \log n \rceil$.) We then show that
    $G[B_{G_t}(v,k^*)]$ contains enough information for $v$ to determine
    $k(v) \in [k^* + 1]$ via local computation. Therefore, after each vertex
    $v \in V_t$ acquires complete knowledge of the radius-$k^*$ ball centered at it,
    it can locally compute index $k(v)$ and proceed to the vertex-selection step.
\end{description}

\noindent We now present the \textit{Lazy-phase algorithm} executed by all
vertices $v \in G$.
\begin{algorithm}[H]
  \caption{Lazy-phase algorithm at vertex $v$}
  \begin{boxedminipage}{\textwidth}
    \small 
    \begin{algorithmic}[1]
      \STATE $k(v) \leftarrow 0$
      \FOR{$i \leftarrow 1$ \TO $t$}
      \STATE $s(v) \leftarrow |\{u \in N_G(v) \mid 1 \leq k(u) < i\}|$
      \IF{$degree_G(v) - s(v) \in [D_i, D_{i-1})$}
	\STATE $k(v) \leftarrow i$
	\STATE Send $k(v)$ to all neighbors
	\STATE \textbf{break}
      \ENDIF
    \ENDFOR
  \end{algorithmic}
\end{boxedminipage}
\end{algorithm}

\begin{lemma}
  The Lazy-phase algorithm runs in $O(\log \log \log n)$ rounds and at the end of
  the algorithm, for each vertex $v \in \cup_{j=1}^t U_j$, $k(v)$ has a value in
  $[t]$ such that $v \in U_{k(v)}$. For any vertex $v \notin \cup_{j=1}^t U_j$,
  $k(v)$ is set to $0$.
\end{lemma}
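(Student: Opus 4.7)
The plan is to establish the round complexity first and then correctness by induction on $i$.

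For the round complexity, I would observe that each iteration of the \textbf{for} loop consists only of (a) a local computation of $s(v)$ from values $k(u)$ already known, (b) an \textbf{if}-test, and (c) a single broadcast of $k(v)$ — a message of $O(\log \log \log n) = O(\log n)$ bits — to every neighbor. Since in the congested clique every pair of nodes shares an edge, each such broadcast completes in one round (each node simply sends the same $O(\log n)$-bit value on all its incident links). Hence each iteration takes $O(1)$ rounds, and the loop completes in $O(t) = O(\log \log \log n)$ rounds.

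For correctness, I would prove by induction on $i \in \{1, \ldots, t\}$ the invariant: \emph{after iteration $i$ of the \textbf{for} loop, $k(v) = j$ holds if and only if $v \in U_j$, for every $j \in [i]$}, and no vertex has its $k$-value changed more than once. The base case $i=1$ is immediate since $s(v) = 0$ in the first iteration, so the test reduces to $\mathrm{degree}_G(v) \in [D_1, D_0)$, which is precisely the defining condition of $U_1$. For the inductive step, I would use the following key identity: by the inductive hypothesis, at the start of iteration $i$ every neighbor $u$ of $v$ with $1 \le k(u) < i$ is exactly a neighbor lying in $U_1 \cup \cdots \cup U_{i-1} = V \setminus V_{i-1}$. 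Therefore
\[
  \mathrm{degree}_G(v) - s(v) \;=\; |N_G(v) \cap V_{i-1}| \;=\; \mathrm{degree}_{G_{i-1}}(v)
\]
for each $v$ that still has $k(v) = 0$. Combined with the \textbf{break} statement (which ensures that once a vertex sets $k(v) \neq 0$ it no longer participates), the test in line 4 becomes exactly the membership test for $U_i$ applied inside $G_{i-1}$, which is its definition. This gives the invariant for $i$.

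Finally, applying the invariant at $i = t$ yields that every vertex in $\bigcup_{j=1}^{t} U_j$ ends with the correct index $k(v) \in [t]$, while vertices in $V_t = V \setminus \bigcup_{j=1}^{t} U_j$ never pass the \textbf{if}-test and retain their initial value $k(v) = 0$. I do not anticipate a significant obstacle here; the only subtle point is verifying that $s(v)$ can indeed be maintained locally, which is guaranteed because any neighbor $u$ that sets $k(u)$ to a nonzero value broadcasts this immediately in the same iteration, so $v$ has up-to-date information on $\{k(u) : u \in N_G(v)\}$ at the start of each subsequent iteration.
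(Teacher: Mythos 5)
Your proposal is correct and follows essentially the same approach as the paper: an induction over iterations showing that the locally computed residual degree $\degree_G(v) - s(v)$ equals $\degree_{G_{i-1}}(v)$, so the \textbf{if}-test coincides with the definition of $U_i$, together with the observation that each iteration needs only one round of broadcasting, giving $t = O(\log\log\log n)$ rounds. Your write-up is simply a more detailed rendering of the paper's (sketchier) argument, and it correctly notes the one subtlety about neighbors' $k$-values being up to date at the start of each iteration.
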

\begin{proof}
  Given that the sets $U_1, U_2, \ldots, U_i$ have been determined, and that the
  members of each are known to every node in the network, each node can locally
  determine its degree in $G_i = G[V_i]$ and thus determine its membership in
  $U_{i+1}$. Each node can then broadcast whether or not it has joined $U_{i+1}$,
  thus providing knowledge of $U_{i+1}$ to every node in the network. It follows
  that the implementation of the Lazy-phase algorithm requires exactly
  $t = \lceil 1 + \log \log \log n \rceil$ rounds of communication to complete.
\end{proof}

\noindent We now present the \textit{Speedy-phase algorithm} executed by vertex
$v$. Note that the Speedy-phase algorithm is only executed at vertices $v$ for
which $k(v)$ is $0$ after the Lazy-phase algorithm. In other words, the
Speedy-phase algorithm is only executed at vertices $v$ in $G_t$, the graph
induced by vertices not in $\cup_{j=1}^t U_j$.
The key idea of the Speedy-phase algorithm is that once each node $v$ in $G_t$ has acquired knowledge of $G_t[B_{G_t}(v, r)]$, then in constant rounds of communication, each node $v$ can ``double'' its knowledge, i.e., acquire knowledge of $G_t[B_{G_t}(v, 2r)]$. This is done by each node $v$ sending knowledge of $G_t[B_{G_t}(v, r)]$ to all nodes in $B_{G_t}(v, r)$; the key is to establish that this volume of communication can be achieved on a congested clique in constant rounds.
This idea has appeared in a slightly different context in \cite{LenzenWattenhoferBAPODC2010}.

\begin{algorithm}[H]
  \caption{Speedy-phase algorithm at vertex $v$}
  \begin{boxedminipage}{\textwidth}
    \small
    \begin{algorithmic}[1]
      \STATE \COMMENT{Growing the ball $B_{G_t}(v,k^*)$}
      \STATE Each node sends a list of all of its neighbors in $G_t$ to each of
      its neighbors (in $G_t$) \COMMENT{After which each $v \in V_t$ knows
	$G[B_{G_t}(v,1)]$}
	\FOR{$i \leftarrow 0$ \TO $\lceil \log \log \log n \rceil - 1$}
	\label{algo:speedy:ballstart}
	\STATE Send a description of $G[B_{G_t}(v,2^i)]$ to all nodes in
	$B_{G_t}(v,2^i)$
	\STATE Construct $G[B_{G_t}(v, 2^{i+1})]$ from $G[B_{G_t}(u, 2^i)]$ 
	received from all $u \in B_{G_t}(v, 2^i)$
	\label{algo:speedy:ballend}
      \ENDFOR
      \STATE Locally compute $k(v) \in [k^* + 1]$ such that $v \in U_{k(v)}$
    \end{algorithmic}
  \end{boxedminipage}
\end{algorithm}

\begin{lemma}
  The Speedy-phase algorithm above runs in $O(\log \log \log n)$ rounds in the
  congested-clique model and when this algorithm completes execution, each vertex $v$
  in $G_t$ knows $G[B_{G_t}(v,k^*)]$.
\end{lemma}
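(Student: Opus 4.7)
The plan is to establish two things: that the algorithm finishes in $O(\log\log\log n)$ rounds, and that at termination every $v \in V_t$ knows $G[B_{G_t}(v, k^*)]$. Correctness is a routine induction; the running-time claim is the real content and rests on a careful match between ball sizes in $G_t$ and the bandwidth of \lra.

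For correctness I would induct on the loop counter $i$, maintaining the slightly stronger invariant that before iteration $i$ each $v \in V_t$ knows $N_{G_t}(w)$ for every $w \in B_{G_t}(v, 2^i)$. The pre-loop step establishes the base case $i=0$, and the inductive step follows because iteration $i$ delivers to $v$ the same kind of information about $B_{G_t}(u, 2^i)$ from every $u \in B_{G_t}(v, 2^i)$; since $B_{G_t}(v, 2^{i+1}) = \bigcup_{u \in B_{G_t}(v, 2^i)} B_{G_t}(u, 2^i)$, $v$ learns $N_{G_t}(w)$ for every $w \in B_{G_t}(v, 2^{i+1})$ and can extract the required induced subgraph locally. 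This slight strengthening is what lets us recover edges whose two endpoints sit on the far boundary of the ball, which a strict induced-subgraph description could miss; the message count below is unaffected since $|B|\cdot\Delta$ still dominates.

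For the running time, the key combinatorial input is that $v \in V_t$ forces $\deg_{G_t}(v) < D_t = n^{1/2^t}$, and with $t = \lceil 1 + \log\log\log n \rceil$ this gives $D_t \le n^{1/(2\log\log n)}$. A BFS bound yields $|B_{G_t}(v, r)| \le D_t^{\,r}$. The largest radius whose description is ever shipped is $r_{\max} = 2^{\lceil\log\log\log n\rceil - 1}$, and writing $y = \log\log\log n$ the cancellation $r_{\max}/2^t = 2^{\lceil y \rceil - 1}/2^{1 + \lceil y \rceil} = 1/4$ yields $|B_{G_t}(v, r_{\max})| \le n^{1/4}$. In the worst iteration each node therefore sends and receives at most
\[
 |B_{G_t}(v, r_{\max})|^{2} \cdot \Delta(G_t) \;\le\; n^{1/2} \cdot n^{1/(2\log\log n)}
\]
messages of $O(\log n)$ bits, comfortably inside the $n$-messages-per-endpoint budget of \lra. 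Each iteration therefore runs in $O(1)$ rounds, and the $\lceil\log\log\log n\rceil$ iterations of the FOR loop together with the $O(1)$-round pre-loop step give $O(\log\log\log n)$ rounds overall. Finally, $2^{\lceil\log\log\log n\rceil}$ is an integer no smaller than $2^{\log\log\log n} = \log\log n$, hence at least $\lceil\log\log n\rceil = k^*$, so the knowledge at termination subsumes $G[B_{G_t}(v, k^*)]$.

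The main hurdle I expect is exactly this message-volume bookkeeping: it is tempting to bound $|B_{G_t}(v, 2^i)|$ by the trivial $\sqrt{n}$ one would get at $r = k^*$, which blows the per-iteration budget by a super-polylogarithmic factor. The algorithm works because descriptions are only ever shipped at radii $r \le r_{\max}$, where the exponent cancellation above pins $|B|$ at $n^{1/4}$; this is precisely the reason the Lazy-phase threshold is taken to be $t = \lceil 1 + \log\log\log n\rceil$ rather than the more obvious $\lceil\log\log\log n\rceil$.
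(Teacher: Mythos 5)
Your proposal is correct and follows essentially the same route as the paper: bound $\Delta(G_t) < D_t = n^{1/2^t}$, bound the ball and its description size, and verify that shipping radius-$2^i$ balls via Lenzen's protocol stays within the $O(n)$ per-node message budget precisely up to the last loop index $i = t-2$, giving $O(1)$ rounds per iteration and radius $2^{\lceil\log\log\log n\rceil} \ge k^*$ at the end. Your explicit correctness induction with the strengthened "neighbor-lists of ball members" invariant is a minor refinement the paper leaves implicit, not a different argument.
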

\begin{proof}
  Line~2 of the Speedy-phase algorithm can be completed in a constant number of rounds using \lra~because each node needs only to send and receive $O(D_t)$ messages to/from $O(D_t)$ neighbors (each message listing a neighbor and destined for a neighbor), as the maximum degree of $G_t$ is less than $D_t$.

  In implementing the Speedy-phase algorithm, the key step is to perform Line~4 in
  $O(1)$ rounds of communication. If this can be done, then after
  $O(\log \log \log n)$ rounds, each node $v$ remaining in $V_t$ will have
  knowledge of its entire neighborhood graph out to a distance of
  $2^{\lceil \log \log \log n \rceil} \geq \lceil \log \log n \rceil = k^*$ hops
  away from $v$.

  Since $G_t$ has maximum degree less than $D_t$, the neighborhood graph
  $G[B_{G_t}(v,2^i)]$ can be completely described by listing all $O(D_t^{2^i+1})$
  edges. Thus, such a neighborhood can be communicated from $v$ to another node
  (in particular, to any other node in $B_{G_t}(v,2^i)$) via
  $O(D_t^{2^i + 1}) = O(n^{(2^i + 1) / 2^t})$ messages of size $O(\log n)$.
  Therefore, to perform a given iteration of Line~4 within the Speedy-phase
  algorithm, each node will need to send (and receive) $O(n^{(2^i + 1) / 2^t})$
  messages (of size $O(\log n)$) to $O(D_t^{2^i}) = O(n^{2^{i-t}})$ other nodes in
  the network. As above, we can use \lra~to perform this
  task in $O(1)$ rounds as long as the total number of messages to be sent (and
  received) by each node is $O(n)$.

  Thus, Line~4 of the Speedy-phase algorithm can be executed in a constant number
  of rounds if $n^{(2^{i+1}+1) / 2^t}$ $= O(n)$; in other words, if
  $2^{i+1} + 1 \leq 2^t$, or $i \leq t - 2 = \lceil \log \log \log n \rceil - 1$.
  This lower bound on the maximum value of $i$ that still allows Line~4 to be
  completed in $O(1)$ rounds is precisely the final index in the for-loop
  (Line~3). This completes the proof.
\end{proof}

\begin{lemma}
  For any graph $H$ and a vertex $v$ in $H$, suppose that $v$ knows the graph
  induced by $B_H(v,k^*)$. Then $v$ can locally compute the index
  $k(v) \in [k^* + 1]$ such that $v \in U_{k(v)}$.
\end{lemma}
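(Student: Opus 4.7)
The plan is to have $v$ locally simulate the degree-based decomposition of $H$ on the induced subgraph $H[B_H(v, k^*)]$ that it already knows, peeling off one level at a time. The key observation is that each level $U_j$ is determined purely by vertex degrees in the current remaining graph, so if $v$ already has a correct partial classification out to some radius, it can compute those degrees exactly and advance the simulation by one level, at the cost of shrinking the ``trusted radius'' by one.

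Concretely, I would prove by induction on $j \in \{0, 1, \ldots, k^*\}$ the invariant: after locally simulating $j$ decomposition levels, $v$ knows, for every vertex $u \in B_H(v, k^* - j)$, whether $u \in U_i$ for some $i \leq j$ (and if so, which $i$), or else that $u$ remains in $V_j$ and is not yet classified. The base case $j = 0$ is immediate since no classification has been made. For the inductive step, assume the invariant at level $j - 1$ and take any $u \in B_H(v, k^* - j)$. Every $H$-neighbor of $u$ lies in $B_H(v, k^* - j + 1)$, and by hypothesis $v$ already knows whether each such neighbor is in $U_1 \cup \cdots \cup U_{j-1}$. Thus $v$ computes $|N_H(u) \cap V_{j-1}|$ exactly, which equals $u$'s degree in $H[V_{j-1}]$, compares it against the interval $[D_j, D_{j-1})$, and decides membership in $U_j$. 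This maintains the invariant at level $j$.

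Applying the invariant at $j = k^*$ and noting that $v$ is contained in $B_H(v, 0) = B_H(v, k^* - k^*)$, vertex $v$ learns whether it lies in any of $U_1, \ldots, U_{k^*}$ and, if so, in which one. If it lies in none of these, then by definition $v \in V_{k^*} = U_{k^* + 1}$, and $v$ outputs $k(v) = k^* + 1$. In every case $v$ computes the unique $k(v) \in [k^* + 1]$ with $v \in U_{k(v)}$ by purely local computation on $H[B_H(v, k^*)]$. The only technical point to be careful about is the bookkeeping that keeps the trusted radius nonnegative as levels are peeled off: there are $k^*$ nontrivial levels to simulate and we start with radius $k^*$, so there is exactly enough slack for the induction to reach $v$ itself, and no communication whatsoever is required.
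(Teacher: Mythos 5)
Your proposal is correct and follows essentially the same argument as the paper: an induction that peels off one decomposition level per step, shrinking the radius of the ball on which classifications are trusted by one each time, and computing each vertex's residual degree from the already-known classification of its neighbors. The paper's proof is the same induction phrased with the residual-degree quantity made explicit, so no further comparison is needed.
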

\begin{proof}
  The proof is by induction. Whether a vertex $u$ is in $U_1$ is determined by its
  degree in $H$. Since $v$ knows $H[B_H(v,k^*)]$ it can determine via local
  computation which $u \in B_H(v,k^*-1)$ belong to $U_1$ and which don't. As the
  inductive hypothesis, suppose that for some $i \geq 1$, $v$ has determined for
  all $u \in B_H(v,k^*-i)$ the following information:
  \begin{itemize}
    \item[(i)] if $u \in \cup_{j=1}^i U_j$, then $v$ knows $k(u) \in [i]$ such that
      $u \in U_{k(u)}$.
    \item[(ii)] if $u \not\in \cup_{j=1}^i U_j$, then $v$ knows that
      $u \not\in \cup_{j=1}^i U_j$.
  \end{itemize}

  Now consider a vertex $u \in B_H(v,k^*-i-1)$ such that
  $u \not\in \cup_{j=1}^i U_j$. In order to determine if $u \in U_{i+1}$, vertex
  $v$ needs to check if the \textit{residual degree} of $u$, defined as
  \begin{equation}
    \label{residualDegree}
    r(u) := degree_H(u) - |N_H(u) \cap (\cup_{j=1}^i U_j)|
  \end{equation}
  belongs to the interval $[D_{i+1}, D_i)$. In other words, we need to check that
    the degree of $u$ after we have deleted all neighbors in $\cup_{j=1}^i U_j$ is
    in the range $[D_{i+1}, D_i)$. Given the information that $v$ knows about all
      $u \in B(v,k^*-i)$ (by the inductive hypothesis), vertex $v$ can compute the
      residual degree $r(u)$ for each $u \in B_H(v,k^*-i-1)$. Therefore for all such
      $u$, vertex $v$ can determine if $u \in U_{i+1}$ or not. This completes the
      inductive step of the proof.

      Now since $B_H(v,0) = \{v\}$, it follows from the above inductive argument that
      $v$ can determine the index $k(v) \in [k^* + 1]$ such that $v \in U_{k(v)}$.
\end{proof}
\subsection{Vertex-Selection Step}
\label{sub:vertex-selection}
\begin{algorithm}[H]
    \caption{Vertex-Selection Step\label{algo:vertex-selection}}
    \begin{boxedminipage}{\textwidth}
      \small
      \begin{algorithmic}
	\IF{$v \in U_k$ for $k = 1, 2, \ldots, k^*$}  
	\STATE $v$ is selected with probability $\min\left(\frac{2 \log n}{D_k}, 1\right)$
      \ENDIF
      \IF{$v \in U_{k^*+1}$} 
      \STATE $v$ is selected with probability 1
    \ENDIF
  \end{algorithmic}
\end{boxedminipage}
\end{algorithm}
\noindent
As mentioned earlier, the vertex-selection step randomly and independently samples nodes in $G$, with
each node $v$ sampled with a probability $p_v$ that depends on the class $U_{k(v)}$ it belongs to.
Specifically, if $v$ belongs to $U_k$ then $v$ is independently selected with probability $\min(2\log n/D_k,1)$.
Algorithm~\ref{algo:vertex-selection} shows pseudocode for the vertex-selection step.
Let $S$ be the set of vertices that are selected. Let $e(S)$ denote the set of edges in the induced graph $G[S]$.

\begin{lemma} \label{lemma:sizeeS}
With high probability $|e(S)|$ is $O(n \cdot \log^2 n)$.
\end{lemma}
\begin{proof}
  Consider an arbitrary $k$, $1 \leq k \leq k^*+1$ such that $U_k \not= \emptyset$.
  We partition the next part of the proof into two cases depending on how large $D_k$
  is relative to $2 \log n$.

        \begin{itemize}
        \item $D_k < 2 \log n$. In this case, vertices in $U_k$ are selected
                to be in $S$ with probability 1. Since each vertex in $U_k$
                has fewer than $D_{k-1}$ neighbors in $U_k$ and since $D_{k-1} = (D_k)^2 < 4 \log^2 n$,
                the total number of edges in $G[S]$ between vertices in $U_k$
                is at most $O(|U_k| \log^2 n)$.

        \item $D_k  \ge 2 \log n$. In this case, vertices in $U_k$ are selected to
                be in $S$ with probability $2\log n/D_k$. Now there are two cases
                based on the relative sizes of $|U_k|$ and $D_{k-1}$.
                \begin{itemize}
                \item[(a)] If $|U_k| \ge D_k$, then whp each vertex $v \in U_k$ has
                        $$O\left(D_{k-1} \cdot \frac{2 \log n}{D_k} \right) = O(D_k \log n)$$
                neighbors in $U_k \cap S$. Furthermore, whp,
                        $$O\left(\frac{|U_k| \log n}{D_k} \right)$$
                vertices in $U_k$ are selected to be in $S$.
                Therefore, whp there are $O(|U_k| \log^2 n)$ edges among vertices
                in $U_k \cap S$.

                \item[(b)] If $|U_k| < D_k$, then $v$ has $O(D_k)$ neighbors in $U_k \cap S$
                with probability 1. Furthermore, whp $O(\log n)$ vertices in $U_k$
                are selected to be in $S$. Therefore, whp there are $O(|U_k| \log n)$
                edges among vertices in $U_k \cap S$.
                \end{itemize}
        \end{itemize}
        Thus in all cases, the number of edges between vertices in $U_k \cap S$ is $O(|U_k| \log^2 n)$.

        In the input graph $G = (V, E)$, each vertex $v \in U_k$ has fewer than $D_j$ neighbors in $U_j$, for $1 \le j < k$.
  Therefore, whp $v$ has $O(D_j \cdot \log n/D_j) = O(\log n)$ neighbors in $D_j \cap S$. This implies
  that whp $v$ has $O(\log^2 n)$ neighbors in $(\cup_{1 \le j < k} U_j) \cap S$.
  Therefore, whp the total number of edges in $G[S]$ between vertices in $U_k$ and vertices in
  $\cup_{1 \le j \le k} U_j$ is $O(|U_k| \log^2 n)$.

  By summing over all $U_k$, we see that whp the total number of edges in $G[S]$ is
  $O(n \log^2 n)$.
\end{proof}

\begin{lemma}
\label{lemma:highProb}
  For any $v \in V$,
  $\Pr(v \mbox{ is in } S \mbox{ or } v \mbox{ has a neighbor in } S) \geq 1 - 1 / 
  n^2$.
\end{lemma}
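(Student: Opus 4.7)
The plan is to split into cases based on $v$'s class index $k(v)$, and in the non-trivial case exhibit enough high-probability neighbors of $v$ to make the failure probability at most $1/n^2$.

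First I would dispose of the easy cases. If $v \in U_{k^*+1}$, then by Algorithm~\ref{algo:vertex-selection} vertex $v$ is selected into $S$ with probability $1$, so the claim holds trivially. Likewise, if $v \in U_k$ for some $k \leq k^*$ and $2\log n / D_k \geq 1$, then $v$ itself is selected with probability $1$ and we are again done.

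The interesting case is $v \in U_k$ with $1 \leq k \leq k^*$ and $D_k > 2\log n$. Here the selection probability for $v$ is exactly $p_v := 2\log n / D_k < 1$. The key observation is that by the very definition of $U_k$, namely $\degree_{G_{k-1}}(v) \in [D_k, D_{k-1})$, the vertex $v$ has at least $D_k$ neighbors in $V_{k-1} = \cup_{j \geq k} U_j$. Call this set of neighbors $N$. For any $u \in N$, we have $u \in U_j$ for some $j \geq k$, and the selection probability of $u$ is $\min(2\log n/D_j, 1)$. Since $D_j \leq D_k$ (the thresholds $D_j = n^{1/2^j}$ are decreasing in $j$), this selection probability is at least $p_v$; for $u \in U_{k^*+1}$ it equals $1 \geq p_v$. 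Thus every neighbor in $N$, as well as $v$ itself, is independently selected into $S$ with probability at least $p_v$.

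Now I would bound the bad event. By independence of the selections,
\[
\Pr\bigl(v \notin S \text{ and } N \cap S = \emptyset\bigr) \leq (1 - p_v)^{|N| + 1} \leq (1 - p_v)^{D_k + 1} \leq e^{-p_v(D_k + 1)} \leq e^{-2\log n} = \frac{1}{n^2},
\]
so the complementary probability is at least $1 - 1/n^2$, as claimed.

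The main thing to be careful about is picking the right neighborhood to work with: one must use neighbors of $v$ in the residual graph $G_{k-1}$ (not arbitrary neighbors in $G$), because these are precisely the ones guaranteed to have selection probability at least $p_v$. Once that choice is made, the degree-decomposition lower bound $\degree_{G_{k-1}}(v) \geq D_k$ paired with the selection probability $2\log n/D_k$ is exactly calibrated to yield a $1/n^2$ failure bound via the standard $(1-x) \leq e^{-x}$ estimate.
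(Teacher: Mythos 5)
Your proof is correct and follows essentially the same route as the paper's: use that $v \in U_k$ has at least $D_k$ neighbors in $V_{k-1}$, each selected independently with probability at least $\min(2\log n/D_k, 1)$, and bound the failure probability by $(1 - 2\log n/D_k)^{D_k} \leq e^{-2\log n}$. One cosmetic slip: since the paper takes logs base 2, $e^{-2\log n}$ is not equal to $1/n^2$ but is in fact smaller (about $1/n^{2.88}$), so the claimed bound still holds.
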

\begin{proof}
  Suppose that $v \in U_k$, for some $1 \leq k \leq k^*$. Vertex $v$ has at least
  $D_k$ neighbors in $V_{k-1}$. Each such neighbor is selected for $S$ with
  probability at least $\min \{(2 \log n) / D_k, 1\}$. If $2 \log n \geq D_k$,
  than any of these neighbors is selected for $S$ with probability $1$, so $v$ has
  a neighbor in $S$ with probability $1$. Otherwise, we have
  \[\Pr(v \mbox{ has no neighbor in } S) \leq \left(1 - \frac{2 \log 
  n}{D_k}\right)^{D_k}
\leq e^{-2 \log n} \leq \frac{1}{n^{2.8}}\]
Also, if $v \in U_{k^*+1}$, then $v$ is selected for $S$ with probability $1$.
\end{proof}


\subsection{High Degree Vertex Removal}

Having computed $S \subseteq V$ with the desired properties in $O(\log\log\log n)$ rounds w.h.p., 
we show how to compute an MIS of $G[S]$ in an additional $O(\log\log\log n)$ rounds.
Since by Lemma \ref{lemma:highProb} every vertex in $V$ is, w.h.p., at most 1 hop from a vertex in $S$,
any MIS of $G[S]$ is a 2-ruling set of $G$.
Our first step in computing an MIS of $G[S]$ is to eliminate high degree vertices, so that vertices
that remain have degree $O(\log^3 n)$.
Note that $G[S]$ has an average degree of $O(\log^2 n)$ (by Lemma \ref{lemma:sizeeS}), but
may have much higher maximum degree.

Consider the sequential \textit{Greedy Randomized MIS (GR-MIS)} algorithm (see for e.g., \cite{GhaffariGKMRPODC18})
that starts by randomly permuting vertices and then considers vertices one by one in this permuted order and 
decides greedily if the vertex being considered will join the MIS.
Set $\delta := \log^2 n$. We now show that the GR-MIS algorithm can be partially executed, on vertices with 
ranks $[1 \ldots |S|/\delta]$, in $O(1)$ rounds in the Congested Clique model.

\begin{theorem}
Processing vertices in $S$ with ranks in $[1 \ldots |S|/\delta]$ by the GR-MIS algorithm can be implemented in 
$O(1)$ rounds in the Congested Clique model.
\end{theorem}
\begin{proof}
One vertex (e.g., the one with lowest \texttt{ID}) is designated the leader and it locally generates a random ranking of all vertices in $S$
and tells each vertex in $S$ its rank.
Let $P \subseteq S$ denote the vertices in $S$ with ranks in $[1 \ldots |S|/\delta]$.
Each vertex in $P$ broadcasts a bit, indicating that it is to be processed.
Using this information, each vertex $v \in P$ figures out the set of incident 
edges $E_v$ in $G[S]$ to other vertices in $P$.
Then the plan is to use Lenzen's routing protocol to send all the sets $E_v$,
for all vertices $v \in S$, to the leader.
Using Chernoff bounds, we cam see that w.h.p.~$|E_v|$ is 
$O(n/\delta) = O(n/\log^3 n)$.
To show that Lenzen's routing protocol succeeds in $O(1)$ rounds, we 
need to show that the volume of information
that the leader needs to receive is $O(n)$.
In other words, we need to show that the subgraph $G[P]$ has $O(n)$ edges w.h.p.

Consider an arbitrary vertex $v \in S$. If $degree(v)$ in $G[S]$ is at least $\delta \log n$, then using Chernoff bounds we can see 
that w.h.p.~$v$ has $O(degree(v)/\delta)$ neighbors in $P$.
Therefore the total number of edges in $G[P]$ incident on these ``high degree'' vertices is
$$\sum_{v \in S} O\left(\frac{degree(v)}{\delta}\right) = O\left(\frac{n \log^2 n}{\delta}\right) = O(n).$$

If $degree(v)$ in $G[S]$ is less than $\delta \log n$, then by using
Chernoff bounds we see that w.h.p.~$v$ has $O(\log n)$ neighbors in $P$.
Also, by Chernoff bounds w.h.p.~the number of vertices in $P$ is bounded above by $O(|S|/\delta)$ if $|S| \ge \delta \log n$ and
is bounded above by $O(\log n)$ if $|S| < \delta \log n$.
Therefore, in either case, w.h.p.~the number of vertices in $P$ is bounded above by $O(n/\delta)$.
Therefore, w.h.p., the total number of edges in $G[P]$ incident on
``low degree'' vertices is $O(n/\delta) \cdot O(\log n) = O(n/\log n)$.
Therefore, the total number of edges in $G[P]$ is $O(n)$ w.h.p.

Thus Lenzen's protocol can be used to send $G[P]$ to the
leader in $O(1)$ rounds. The leader locally simulates GR-MIS on $G[P]$ and informs
every vertex in $P$ that has joined the MIS.
Finally, each vertex in $P$ that has joined the MIS broadcasts this
information.
\end{proof}

Let $I \subseteq S$ be the independent set of the vertices selected by the GR-MIS
algorithm on vertices in $S$ with ranks in $[1 \ldots |S|/\delta]$.
Let $R = S \setminus (I \cup N(I))$ be the set of vertices that are remaining to be
processed.

\begin{lemma}
The maximum degree in the graph $G[R]$ is $O(\log^3 n)$ w.h.p.
\label{lemma:lowMaxDegree}
\end{lemma}
\begin{proof}
Lemma 3.1 in \cite{GhaffariGKMRPODC18} implies that w.h.p.~the maximum degree of the graph that remains is
        $$O\left(\frac{|S| \log n}{|S|/\delta}\right) = O(\log^3 n).$$
\end{proof}

\subsection{MIS on Graphs with Low Maximum Degree}
We now compute an MIS on the graph $G[R]$.
To do this we simply use the fact that the maximum degree in $G[R]$ is $O(\log^3 n)$ w.h.p.~(Lemma \ref{lemma:lowMaxDegree}) and appeal
to Lemma 2.15 in \cite{GhaffariPODC17} which asserts that if the maximum degree of a graph $\Delta \le 2^{c' \sqrt{\log n}}$ for a
sufficiently small constant $c'$, then there is a Congested Clique algorithm that computes
an MIS of this graph in $O(\log \log \Delta)$ rounds.
Applying this lemma to $G[R]$ implies MIS can be computed on $G[R]$ in $O(\log\log\log n)$ rounds.

\subsection{Putting it all together}

We now combine the four steps described in the preceeding text: 
(i) degree-decomposition,
(ii) vertex-selection, 
(iii) high-degree vertex removal, and
(iv) MIS computation on graphs with low maximum degree, 
to obtain a 2-ruling set algorithm that runs in $O(\log \log \log n)$ rounds
in w.h.p.
The algorithm is summarized below.

\begin{algorithm}[H]
  \caption{2-Ruling Set Algorithm}
  \begin{boxedminipage}{\textwidth}
    \small 
    \begin{algorithmic}[1]
	    \STATE Run the lazy degree-decomposition algorithm followed by the speedy degree-decomposition algorithm 
      	    \STATE Every vertex $v \in V$ now knows an index $k(v) \in [k^* + 1]$ such that $v \in U_{k(v)}$
	    and we use this knowledge to run the vertex-selection step to compute $S$ 
	    \STATE Run GR-MIS partitally to compute and independent set $I$ of $G[S]$ 
	    \STATE Let $R$ denote $S \setminus (I \cup N(I))$. Run the low-degree MIS algorithm of
	    Ghaffari \cite{GhaffariPODC17} on $G[R]$.
    \end{algorithmic}
  \end{boxedminipage}
\end{algorithm}

\begin{theorem}
There is a 2-ruling set algorithm in the Congested Clique model that runs in $O(\lll)$ rounds
w.h.p.
\end{theorem}
\begin{proof}
	Phases 1, 2, and 5 in the above algorithm take $O(\lll)$ rounds each, w.h.p.
	Phases 3 and 4 take $O(1)$ rounds.
	Every vertex in $V$ is at most 1 hop from $S$, w.h.p. We then compute an MIS of $G[S]$ and
	this MIS is a 2-ruling set of $G$.
\end{proof}

\section{MIS in Growth Bounded Graphs in Constant Rounds}
\label{sec:mis_in_doubling}

Given a metric space $(V,d)$ with constant doubling dimension, we show in this section how to compute an MIS
of a distance-threshold graph $G_r = (V, E_r)$, for any real $r \ge 0$, in 
a \textit{constant} number of rounds on a congested clique.

\subsection{Simulation of the Schneider-Wattenhofer MIS algorithm.}
\label{subsection:SW-MIS}
Before we describe our MIS algorithm, we describe an algorithmic tool that will
prove quite useful.
We know that $G_r$ is growth-bounded and in particular the size of a largest independent set in a ball $B_{G_r}(v, r)$
for any $v \in V$ is $O(r^\rho)$, where $\rho$ is the doubling dimension of $(V, d)$.
Schneider and Wattenhofer \cite{schneider2008logstar} present a deterministic $O(\log^*n)$-round
algorithm to compute an MIS for growth-bounded graphs in the $\cm$ model. 
Suppose that $f$ is a constant such that the Schneider-Wattenhofer algorithms runs in at most
$f \log^*n$ rounds (note that $f$ depends on $\rho$).
We can \textit{simulate} the Schneider-Wattenhofer algorithm in the congested clique model
by (i) having each node $v \in V$ grow a ball of radius $f \log^*n$, i.e., gather a description of the induced graph 
$G[B_{G_r}(v, f\log^* n)]$ and then (ii) having each node $v$ \textit{locally simulate}
the Schneider-Wattenhofer algorithm using the description of $G[B_{G_r}(v, f\log^* n)]$.
Note that since the Schneider-Wattenhofer algorithm takes at most $f \log^* n$ rounds,
it suffices for each node $v \in V$ to know the entire topology of $G[B_{G_r}(v, f\log^* n)]$
to determine if it should join the MIS.
The ``ball growing'' step mentioned above can be implemented by using Lenzen's routing protocol as follows, provided
$\Delta$ (the maximum degree of $G_r$) is not too large.
Each node $v$ can describe its neighborhood using at most $\Delta$ messages of size $O(\log n)$ each.
Node $v$ aims to send each of these $\Delta$ messages to every node $w$ such that $d(v, w) \le r \cdot f\log^*n$.
In other words, $v$ aims to send messages to all nodes in $B_M(v, r \cdot f\log^*n)$.
Since $B_{G_r}(v, f \log^*n) \subseteq B_M(v, r \cdot f\log^*n)$, it follows that the messages sent by $v$ are received
by all nodes in $B_{G_r}(v, f \log^*n)$.
We now bound the size of $B_M(v, r \cdot f\log^*n)$ as follows.
Since $M$ has doubling dimension $\rho$, the size of any MIS in $B_M(v, r \cdot f\log^*n)$ is $O((\log^*n)^{\rho})$ and hence 
total number of nodes in $B_M(v, r \cdot f\log^*n)$ is $O(\Delta \cdot (\log^*n)^{\rho})$. 
Therefore every node $v$ has $O((\log^*n)^{\rho}\cdot \Delta^2 )$ messages to send, each of size $O(\log n)$.
Every node is the receiver of at most $O((\log^*n)^{\rho} \Delta^2)$ messages by 
similar arguments.  
Therefore, if $\Delta = O(\sqrt{n}/(\log^* n)^{\rho/2})$, we can use \lra~to route these messages in $O(1)$ time. We refer 
this simulation of the Schneider-Wattenhofer algorithm \cite{schneider2008logstar} 
as Algorithm \textsc{SW-MIS}. The following theorem summarizes this simulation result. 
\begin{theorem}\label{thm:logstar} 
  If $\Delta(G_r) = O(\sqrt{n}/(\log^* n)^{\rho/2})$ then Algorithm 
  \textsc{SW-MIS} computes an MIS of $G_r$ in $O(1)$ rounds on a congested clique.
\end{theorem}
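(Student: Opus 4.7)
The plan is to verify the two claims implicit in the simulation description: (i) the ball-growing step can be implemented in $O(1)$ rounds on the congested clique under the given degree bound, and (ii) once each node knows the induced topology of $G_r[B_{G_r}(v, f \log^* n)]$, it can locally reproduce the Schneider-Wattenhofer decisions without any further communication. Step (ii) is essentially by definition of a local algorithm running in $f \log^* n$ rounds, so the entire argument reduces to showing that ball-growing fits into Lenzen's routing protocol.

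For step (i), I would first observe the containment $B_{G_r}(v, f\log^* n) \subseteq B_M(v, r \cdot f \log^* n)$: each $G_r$-hop corresponds to a metric step of length at most $r$. Thus it suffices for each node $v$ to broadcast its $G_r$-neighborhood (a list of at most $\Delta$ endpoints, encodable in $O(\Delta)$ messages of $O(\log n)$ bits) to every node lying in the metric ball of radius $r \cdot f \log^* n$ around it. I would then bound the cardinality of this metric ball using the growth-bounded property of the metric. An independent set inside $B_M(v, r \cdot f \log^* n)$ has aspect ratio $O(\log^* n)$ (after scaling by the minimum pairwise distance among $G_r$-adjacent points), so its size is $O((\log^* n)^\rho)$; multiplying by the degree bound $\Delta$ gives $|B_M(v, r \cdot f \log^* n)| = O(\Delta \cdot (\log^* n)^\rho)$.

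Combining these counts, each node sends at most $O(\Delta^2 \cdot (\log^* n)^\rho)$ messages and, by a symmetric argument, receives at most $O(\Delta^2 \cdot (\log^* n)^\rho)$ messages. Under the hypothesis $\Delta = O(\sqrt{n}/(\log^* n)^{\rho/2})$, this is $O(n)$ messages per node, which is exactly the regime in which \lra~routes all messages to their destinations in $O(1)$ rounds. Hence ball-growing completes in $O(1)$ rounds, each node learns $G_r[B_{G_r}(v, f \log^* n)]$, and then simulates the Schneider-Wattenhofer algorithm locally to decide its MIS membership.

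The only delicate step is the bound on $|B_M(v, r \cdot f\log^* n)|$: one must be careful that the doubling-dimension-based bound is applied to a representative/independent subset rather than to all points in the ball, and then inflated by a factor of $\Delta$ to cover the remaining vertices. Everything else is a bookkeeping application of \lra~and a direct appeal to the Schneider-Wattenhofer round complexity.
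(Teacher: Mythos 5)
Your proposal is correct and follows essentially the same route as the paper's argument: the containment $B_{G_r}(v, f\log^* n) \subseteq B_M(v, r\cdot f\log^* n)$, the bound $|B_M(v, r\cdot f\log^* n)| = O(\Delta\cdot(\log^* n)^\rho)$ obtained from the growth-bounded property applied to an independent subset and inflated by $\Delta$, the resulting $O(\Delta^2(\log^* n)^\rho) = O(n)$ per-node message load, and a single application of \lra~followed by local simulation of the Schneider-Wattenhofer algorithm. No substantive differences to report.
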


\subsection{Constant-Round MIS Algorithm}
\label{subsection:constantRoundMIS}
Our MIS algorithm consists of 4 phases.
Next we describe, at a high level, what each phase accomplishes.
\begin{description}
  \item[Phase 1:] We compute vertex-subset $P \subseteq V$ such that (i) every
    vertex in $V$ is at most one hop away from some vertex in $P$ and (ii) $G_r[P]$
    has maximum degree bounded above by $c\cdot\sqrt{n}$, for some constant $c>0$.

  \item[Phase 2:]
    We process the graph $G_r[P]$ and compute two subsets $W$ and $Q$ of $P$
    such that (i) every vertex in $P$ of degree at least $c\cdot n^{1/4}$ is either in
    $W$ or has a neighbor in $W$ and (ii) $Q \subseteq W$ is an independent set
    such that every vertex in $W$ is at most 2 hops from some vertex in $Q$.
    Thus, if we delete $W$ and all neighbors of vertices in $W$ what remains is a graph 
    of maximum degree less than $c \cdot n^{1/4}$.
    Let $V'$ denote the set $P \setminus (W \cup N(W))$. 
    Thus, at the end of Phase 2, $Q$ is a 3-ruling set of $G_r[W \cup N(W)]$
    and $\Delta(G_r[V']) <c\cdot n^{1/4}$. 

  \item[Phase 3:] We compute an MIS $R$ of the graph $G_r[V']$
    by simply calling \textsc{SW-MIS}.

  \item[Phase 4:] Since $Q$ is a 3-ruling set of $G_r[W \cup N(W)]$ and
    $R$ is an MIS of $G_r[V']$,
    we see that $Q \cup R$ is a 3-ruling set of $G_r[P]$ and thus a 4-ruling set of $G_r$. 
    In the final phase, we start with the 4-ruling set $Q \cup R$ and expand this into an MIS 
    $I$ of $G_r$.
\end{description}
Phase 2 is randomized and runs in constant rounds w.h.p. The remaining phases are deterministic and run in constant rounds each.
Algorithm \textsc{LowDimensionalMIS} summarizes our algorithm. 
We now describe each phase in more detail.
\begin{algorithm}[t]
  \caption{\textsc{LowDimensionalMIS}\label{algo:lowDimensionalMIS}}
  \begin{boxedminipage}{\textwidth}
    \small
    \begin{algorithmic}[1]
      \REQUIRE {$G_r = (V, E_r)$}
      \ENSURE {A maximal independent set $I \subseteq V$ of $G_r$}
      \STATE $P \leftarrow \textsc{ReduceDegree}(G_r)$ \COMMENT{Phase 1}
      \STATE $(W, Q) \leftarrow \textsc{SampleAndPrune}(G_r, P)$ \COMMENT{Phase 2}
      \STATE $V' \leftarrow V \setminus (W \cup N(W))$; $R \leftarrow \textsc{SW-MIS}(G_r, V')$ \COMMENT{Phase 3}
      \STATE $S \leftarrow Q \cup R$; $I \leftarrow \textsc{RulingToMIS}(S)$ \COMMENT{Phase 4}
      \RETURN $I$
    \end{algorithmic}
  \end{boxedminipage}
\end{algorithm}

\subsection{Phase 1: Reduce Degree to $O(\sqrt{n})$}
\begin{algorithm}[!ht]
  \caption{\textsc{ReduceDegree} (Phase 1)\label{algo:reduceDegree}}
  \begin{boxedminipage}{\textwidth}
    \small
    \begin{algorithmic}[1]
      \REQUIRE {$G_r = (V,E_r)$}
      \ENSURE {$P \subseteq V$ such that (i) $V = P \cup N(P)$ and (ii) 
	$\Delta(G_r[P]) < c \cdot \sqrt{n}$ for some constant $c > 0$.}
	\STATE Partition $V$ (arbitrarily) into $\lceil\sqrt{n}\rceil$ subsets: $V_1, 
	V_2, \ldots V_{\lceil\sqrt{n}\rceil}$, each of size at most $\sqrt{n}$
	\FORALL{$i \leftarrow 1$ \textbf{to} $\lceil \sqrt{n} \rceil$ \textbf{in 
	parallel}}
	\STATE Send $G_r[V_i]$ to a vertex $v_i$ with lowest ID in $V_i$
	\STATE Vertex $v_i$ executes $P_i \leftarrow \textsc{LocalMIS}(G_r[V_i])$
      \ENDFOR
      \STATE $P \leftarrow \cup_{i=1}^{\lceil\sqrt{n}\rceil} P_i$
      \RETURN $P$
    \end{algorithmic}
  \end{boxedminipage}
\end{algorithm}

\noindent
Algorithm \textsc{ReduceDegree} describes Phase~1 of our algorithm.
The algorithm consists of arbitrarily partitioning the vertex-set of $G_r$ into $\sqrt{n}$ groups of size (roughly) $\sqrt{n}$ each and then separately and in parallel computing an MIS of each part.
Since each part has $\sqrt{n}$ vertices, each part induces a subgraph with at most $n$ edges and 
therefore each such subgraph can be shipped off to a distinct node and 
MIS on each subgraph can be computed locally.
(The subroutine \textsc{LocalMIS} in Line~4 refers to an unspecified MIS algorithm that is executed
locally at a node.)
Using the fact that $G_r$ is growth-bounded, we show that 
the union of all the MIS sets (set $P$, Line~5) induces a graph with maximum degree bounded by $c \cdot \sqrt{n}$
for some constant $c$. Also, we show that Phase~1 runs in constant rounds (Lemma~\ref{lemma:si}).
\begin{lemma}\label{lemma:si}
  Algorithm \textsc{ReduceDegree} completes in $O(1)$ rounds and returns a set
  $P$ such that $\Delta(G_r[P]) < c \cdot n^{1/2}$ for some constant $c > 0$
  (that depends on the doubling dimension of the underlying space).
\end{lemma}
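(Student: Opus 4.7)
The plan is to establish the two parts of the lemma separately: first the constant round complexity of Algorithm \textsc{ReduceDegree}, and then the bound on $\Delta(G_r[P])$ using the doubling property of the underlying metric.

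For the running time, I would argue that each of the communication steps (shipping $G_r[V_i]$ to $v_i$ and, afterwards, broadcasting $P_i$) can be cast as an Information Distribution Task and executed using \lra~in $O(1)$ rounds. Since $|V_i| \leq \lceil \sqrt{n} \rceil$, every node $u \in V_i$ knows at most $\sqrt{n}$ neighbors inside $V_i$ (namely $N_{G_r}(u) \cap V_i$), so it sends $O(\sqrt{n})$ messages to $v_i$. Each leader $v_i$ receives the edges of $G_r[V_i]$, which total $O(n)$ messages; likewise, when broadcasting $P_i$, each leader sends $|P_i| \leq \sqrt{n}$ IDs to each of the $n$ nodes, so every node sends and receives $O(n)$ messages overall. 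The local MIS computation at $v_i$ is free in our model. Thus Phase~1 finishes in $O(1)$ rounds.

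The correctness of property (i) is immediate from the MIS definition: any $u \in V_i$ is either in $P_i \subseteq P$ or has a neighbor in $P_i \subseteq P$, so $V = P \cup N(P)$. The main obstacle is property (ii), the degree bound on $G_r[P]$. Here I would fix an arbitrary $v \in P$ and bound $|N_{G_r}(v) \cap P|$ by partitioning this set by group membership, $N_{G_r}(v) \cap P = \bigcup_{j=1}^{\lceil \sqrt{n} \rceil} (N_{G_r}(v) \cap P_j)$, and controlling each piece separately.

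The key observation is that $N_{G_r}(v) \cap P_j$ lies inside the metric ball $B_M(v, r)$ (every neighbor in $G_r$ is within metric distance $r$) and, being a subset of the independent set $P_j$, consists of points that are pairwise at metric distance strictly greater than $r$. Invoking the doubling property, $B_M(v, r)$ can be covered by $2^{\rho}$ balls of radius $r/2$, and no two points at distance $>r$ can lie in the same such sub-ball. Hence $|N_{G_r}(v) \cap P_j| \leq 2^{\rho}$, and summing over the $\lceil \sqrt{n} \rceil$ groups gives $\Delta(G_r[P]) \leq 2^{\rho} \lceil \sqrt{n} \rceil < c \cdot \sqrt{n}$ for $c := 2^{\rho+1}$, completing the argument. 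The technical subtlety to watch for is making sure the ``pairwise distance $>r$'' conclusion is applied to the metric rather than graph distance, and remembering that this constant $c$ depends only on the doubling dimension $\rho$, as claimed.
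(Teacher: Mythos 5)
Your argument is correct and essentially the paper's own: ship each $G_r[V_i]$ (at most $n$ edges) to a leader via Lenzen's protocol, and bound $\Delta(G_r[P])$ by observing that $N_{G_r}(v)\cap P_j$ is a set of pairwise-$(>r)$-separated points inside $B_M(v,r)$, hence of size at most $2^{\rho}$ by doubling, giving $\Delta(G_r[P]) \le 2^{\rho}\lceil\sqrt{n}\rceil$. The only slip is the dissemination step you added: having each leader send the $\le \sqrt{n}$ IDs of $P_i$ to all $n$ nodes would require that leader to send $\Theta(n^{3/2})$ messages, exceeding Lenzen's $O(n)$ per-sender budget; this broadcast is unnecessary (the paper's leader only informs the members of $P_i$, which is $O(\sqrt{n})$ messages, and a global announcement, if desired, can be done by each selected node broadcasting a single bit).
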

\begin{proof}
  Algorithm \textsc{ReduceDegree} starts by arbitrarily
  partitioning $V$ into $\lceil \sqrt{n} \rceil$ disjoint subsets $V_1, \ldots, 
  V_{\lceil\sqrt{n}\rceil}$ each of size at most $\sqrt{n}$ which can be done in
  $O(1)$ rounds easily.
  Since $|V_i| \le \sqrt{n}$, $G_r[V_i]$ contains at most $n$ edges, for any $i 
  \in \lceil \sqrt{n} \rceil$.
  Using \lra, all knowledge of $G_r[V_i]$ can be shipped off to a designated
  vertex $v_i$ in $V_i$ (e.g., vertex with smallest ID in $V_i$) in $O(1)$ rounds.
  The vertex $v_i$ then computes an MIS $P_i$ of $G_r[V_i]$ locally as shown in
  Line 4 of Algorithm \textsc{ReduceDegree}.
  Finally, $v_i$ informs vertices in $P_i$ of their selection into the MIS.
  The union of the $P_i$'s, denoted $P$, is returned by the algorithm.
  This discussion shows that Algorithm \textsc{ReduceDegree} completes
  in $O(1)$ rounds.

  Consider a vertex $u \in P_i$ for some $i\in [\lceil \sqrt{n} \rceil]$. In
  $G_r[P]$, vertex $u$ cannot have neighbors in $P_i$ since $P_i$ is an
  independent set in $G_r[P]$.
  Consider a set $P_j$, $j\neq i$.
  The distance between any two vertices in $N(u) \cap P_j$ must be more than $r$
  (these nodes are independent) and it must be at most $2r$ (by the triangle
  inequality).
  Since the underlying metric space has doubling dimension $\rho$, it follows that
  $|N(u) \cap P_j| \le 2^{\rho}$.
  Hence the degree of $u$ in $G_r[P]$ is bounded above by $2^{\rho} \cdot (\lceil 
  \sqrt{n} \rceil - 1)$.
  The result follows.
\end{proof}


\subsection{Phase 2: Sample and Prune}
\label{sub:phase2}
\begin{algorithm}[t]
  \caption{\textsc{SampleAndPrune} (Phase 2)\label{algo:sampleAndPrune}}
  \begin{boxedminipage}{\textwidth}
    \small
    \begin{algorithmic}[1]
      \REQUIRE {$(G_r, P)$}
      \ENSURE {$(W, Q)$, $W \subseteq P$ such that $\{v \in P \mid 
	\degree_{G_r[P]}(v) \ge n^{1/4}\} \subseteq W \cup N(W)$; 
      independent set $Q \subseteq W$ such that $Q$ is a 2-ruling set of $G_r[W]$.}
      \FORALL{$v \in P$ \textbf{in parallel}}
      \STATE Vertex $v \in P$ adds itself to $W_i$ with probability $1/n^{1/4}$ 
      for $i = 1, 2, \ldots, \lceil 2 \cdot \log n \rceil$.  
    \ENDFOR
    \STATE $W \leftarrow \cup_{i=1}^{\lceil 2 \log n \rceil} W_i$
    \FORALL{$i \leftarrow 1$ \textbf{to} $\lceil 2 \log n \rceil$ \textbf{in 
    parallel}}
    \STATE Send $G_r[W_i]$ to a vertex $w_i$, where $w_i$ is the vertex of 
    rank $i$ in the sequence of vertices in $V$ sorted by increasing ID
    \STATE Vertex $w_i$ executes $X_i \leftarrow \textsc{LocalMIS}(G_r[W_i])$
  \ENDFOR
  \STATE $Q \leftarrow \textsc{SW-MIS}(G_r[\cup_{i=1}^{\lceil 2 \log n \rceil} 
  X_i])$
  \RETURN $(W, Q)$
\end{algorithmic}
\end{boxedminipage}
\end{algorithm}

\noindent
Algorithm \textsc{SampleAndPrune} implements Phase 2 of our MIS algorithm.
It takes the induced subgraph $G_r[P]$ as input and starts by computing a set $W 
\subseteq P$ using a simple random sampling approach.
Specifically, for each $i = 1, 2, \ldots, \lceil 2 \cdot \log n \rceil$, each vertex in $P$ simply 
adds itself to a set $W_i$ independently, with probability $1/n^{1/4}$.
We start by proving a useful property of $W$.

\begin{lemma}\label{lemma:nice}
  Every node $u$ with degree at least $n^{1/4}$ in $G_r[P]$ has a neighbor in $W$ with probability at least
  $1-\frac{1}{n^2}$. 
\end{lemma}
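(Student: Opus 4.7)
The plan is to give a direct union-bound-free calculation based on independence of the $\lceil 2\log n\rceil$ sampling rounds. Fix a vertex $u \in P$ with $\degree_{G_r[P]}(u) \geq n^{1/4}$, and let $N(u)$ denote its neighborhood in $G_r[P]$. The event ``no neighbor of $u$ lies in $W$'' is precisely the event that, for \emph{every} $i \in \{1,\ldots,\lceil 2\log n\rceil\}$, no vertex of $N(u)$ ended up in $W_i$. Since the samples defining the $W_i$'s are mutually independent, it suffices to bound the single-round failure probability and then take the product.

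For a fixed $i$, the events $\{w \in W_i\}$ for $w \in N(u)$ are mutually independent Bernoullis with success probability $1/n^{1/4}$. So
\[
\Pr\bigl(N(u) \cap W_i = \emptyset\bigr) \;=\; \bigl(1 - 1/n^{1/4}\bigr)^{|N(u)|} \;\leq\; \bigl(1 - 1/n^{1/4}\bigr)^{n^{1/4}} \;\leq\; e^{-1},
\]
using $|N(u)| \geq n^{1/4}$ and the standard bound $(1-x)^{1/x}\leq e^{-1}$.

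Multiplying over the $\lceil 2\log n\rceil$ independent sampling rounds,
\[
\Pr\bigl(N(u) \cap W = \emptyset\bigr) \;\leq\; e^{-\lceil 2\log n\rceil} \;\leq\; e^{-2\log n} \;\leq\; \frac{1}{n^2},
\]
where the last inequality uses $\log = \log_2$ and $e^{-2\log_2 n} = n^{-2\log_2 e} \leq n^{-2}$. Hence $u$ has a neighbor in $W$ with probability at least $1 - 1/n^2$.

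The argument is entirely elementary; there is no real obstacle. The only subtlety worth flagging is that the independence across rounds $i$ is what lets us avoid a union bound (which would lose a factor of $|N(u)|$) and simply multiply probabilities to drive the failure bound down to $1/n^2$. The assumption $\degree_{G_r[P]}(u) \geq n^{1/4}$ is used precisely to make the per-round failure probability a constant ($\leq 1/e$), so that $\Theta(\log n)$ independent rounds suffice for the high-probability guarantee.
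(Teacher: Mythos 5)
Your proof is correct and is essentially the paper's argument: both compute the failure probability as $\bigl(1-1/n^{1/4}\bigr)^{\lceil 2\log n\rceil\cdot n^{1/4}} \le e^{-\lceil 2\log n\rceil} \le 1/n^2$, differing only in whether the product is first grouped over sampling rounds (yours) or over neighbors (the paper's). No gap; the paper's proof likewise uses no union bound.
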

\begin{proof}
  Let $u \in P$ be a node with degree at least $n^{1/4}$ in $G_r[P]$.
  For any neighbor $v$ of $u$, $\Pr(v \notin W) \leq \left(1 - \frac{1}{n^{1/4}}\right)^{\lceil 2\log n \rceil}$.
  Therefore the probability that no neighbor of $u$ is in $W$ is at most
  $\left(1-\frac{1}{n^{1/4}}\right)^{\lceil 2\log n \rceil \cdot n^{1/4}}$.
  This is bounded above $e^{-\lceil 2\log n \rceil}$, which is bounded above by $1/n^2$.
\end{proof}

After using random sampling to compute $W$, Algorithm \textsc{SampleAndPrune} 
then ``prunes'' $W$ in constant rounds to construct a subset $Q \subseteq W$ such 
that $Q$ is a 2-ruling set of $W$. In the rest of this subsection we prove that 
Algorithm \textsc{SampleAndPrune} 
does behave as claimed here.

\begin{lemma} \label{lemma:edgew}
  The number of edges in $G_r[W_i]$ is $O(n)$ w.h.p., for each $i  = 1, 2, \ldots, \lceil 2 \log n \rceil$.  
\end{lemma}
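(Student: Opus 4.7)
The plan is to prove the bound by (i) controlling the maximum degree of $G_r[W_i]$, and (ii) controlling the size $|W_i|$, both w.h.p.; then the edge count is at most their product divided by two.

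First I would fix an $i$ and analyze the neighborhood of a single vertex $v \in P$. By Lemma~\ref{lemma:si} we know $\Delta(G_r[P]) < c \cdot \sqrt{n}$, so $v$ has fewer than $c\sqrt{n}$ neighbors in $G_r[P]$. Each of these neighbors is placed into $W_i$ independently with probability $1/n^{1/4}$, so the expected number of $v$'s neighbors landing in $W_i$ is at most $c \cdot n^{1/4}$. Since this expectation grows polynomially in $n$, a standard multiplicative Chernoff bound gives that the number of neighbors of $v$ in $W_i$ exceeds, say, $2c \cdot n^{1/4}$ with probability at most $\exp(-\Omega(n^{1/4}))$, which is certainly $\leq 1/n^{c'}$ for any constant $c'$. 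A union bound over all $|P| \le n$ choices of $v$ shows that, for the fixed $i$, every vertex in $P$ has at most $O(n^{1/4})$ neighbors in $W_i$ with probability at least $1 - 1/n^{c'-1}$. In particular, $\Delta(G_r[W_i]) = O(n^{1/4})$ w.h.p.

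Next I would bound $|W_i|$. Since each of at most $n$ vertices of $P$ joins $W_i$ independently with probability $1/n^{1/4}$, we have $\E[|W_i|] \leq n^{3/4}$. Another Chernoff bound yields $|W_i| \leq 2 n^{3/4}$ with probability at least $1 - \exp(-\Omega(n^{3/4}))$, again w.h.p.

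Combining the two bounds gives $|E(G_r[W_i])| \leq \tfrac{1}{2}\,|W_i| \cdot \Delta(G_r[W_i]) = O(n^{3/4}) \cdot O(n^{1/4}) = O(n)$ w.h.p.\ for the fixed $i$. A final union bound over the $\lceil 2 \log n \rceil$ choices of $i$ preserves the high-probability statement (only the constant $c'$ above needs to be chosen slightly larger). I do not anticipate any real obstacle here: the key input is Lemma~\ref{lemma:si}, which provides the $O(\sqrt{n})$ degree bound on $G_r[P]$ needed to make the per-vertex Chernoff estimate work, and everything else is a routine application of concentration plus a union bound.
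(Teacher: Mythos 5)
Your proposal is correct and follows essentially the same route as the paper's proof: bound $|W_i| = O(n^{3/4})$ and the maximum degree of $G_r[W_i]$ by $O(n^{1/4})$ via Chernoff bounds (using the $O(\sqrt{n})$ degree bound from Lemma~\ref{lemma:si}), take a union bound, and multiply. The only cosmetic differences are the constants and that you add an explicit union bound over the $\lceil 2\log n\rceil$ indices $i$, which is harmless.
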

\begin{proof}
We first bound the size of the set $W_i$ and the maximum degree of $G_r[W_i]$ for any $i = 1, 2, \ldots, \lceil 2 \log n \rceil$. 
Observe that $\E[|W_i|] = n^{3/4}$ and since nodes join $W_i$ independently,
an application of Chernoff's bound~\cite{dubhashiBook} yields 
$\Pr(|W_i| \leq 6n^{3/4}) \geq 1 - \frac{1}{n^2}$. 
To bound $\Delta(G_r[W_i])$ we use the fact that degree of any node in $G_r[P]$ is at most $\sqrt{n}$ and therefore the expected degree of any node in 
$G_r[W_i]$ is at most $n^{1/4}$.
Another application of Chernoff's bound yields $\Pr(\mbox{degree}_{G_r[W_i]}(v) \leq 6n^{1/4}) \geq 1 - \frac{1}{n^2}$ for each node $v$.
Using the union bound over all nodes $v \in W_i$ yields that with probability at least $1 - 
\frac{1}{n}$ every node in $G_r[W_i]$ has degree at most $6n^{1/4}$.
Hence, with high probability, the number of edges in $G[W_i]$ is at most $36n$. 
\end{proof}

\begin{lemma}\label{lemma:wimis}
  The set $ X := \cup_{i=1}^{\lceil 2\log n \rceil} X_i \subseteq P$ is 
  computed in constant rounds w.h.p.~in Lines 4-6
  of Algorithm \textsc{SampleAndPrune}. 
  Furthermore, Every vertex in $W$ is at most one hop away from some vertex in $X$.
\end{lemma}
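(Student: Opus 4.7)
The plan is to establish both claims separately: first, that the parallel shipping and subsequent local MIS computations in Lines~4--6 complete in $O(1)$ rounds w.h.p., and second, that $X = \cup_i X_i$ one-hop dominates $W$.

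For the round-complexity part, I would invoke \lra{} to execute all $\lceil 2 \log n \rceil$ shipments of $G_r[W_i]$ to $w_i$ simultaneously, then let each $w_i$ perform \textsc{LocalMIS} without further communication. To apply Lenzen's protocol I must verify that every node sends and receives $O(n)$ messages. On the receiver side, Lemma~\ref{lemma:edgew} gives $|E(G_r[W_i])| = O(n)$ w.h.p., so each designated $w_i$ (which is a distinct vertex since there are $\lceil 2 \log n \rceil$ indices and $n \gg 2\log n$ vertices) receives $O(n)$ edge-messages. On the sender side, a vertex $v \in P$ must transmit, for each $i$ with $v \in W_i$, its $G_r[W_i]$-neighborhood to $w_i$; the proof of Lemma~\ref{lemma:edgew} gives $|N_{G_r[W_i]}(v)| = O(n^{1/4})$ w.h.p.

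The main obstacle is bounding the per-sender message count, since a vertex could in principle lie in many $W_i$'s. Here I would observe that $v$ is placed into each $W_i$ independently with probability $1/n^{1/4}$, so the number of indices $i$ with $v \in W_i$ is $\mathrm{Bin}(\lceil 2\log n \rceil, 1/n^{1/4})$, whose mean is $o(1)$. A Chernoff bound shows this count is $O(\log n)$ w.h.p., and a union bound over all $v \in P$ makes the statement simultaneous. Consequently each $v$ sends at most $O(n^{1/4} \log n) = o(n)$ messages, so \lra{} disseminates all $G_r[W_i]$'s in $O(1)$ rounds w.h.p.

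For the one-hop domination, the argument is immediate from the definitions: $W = \cup_i W_i$, so any $v \in W$ lies in some $W_i$; since $X_i$ is a maximal independent set of $G_r[W_i]$, either $v \in X_i \subseteq X$ or $v$ has a neighbor in $X_i \subseteq X$, which is exactly the claim.
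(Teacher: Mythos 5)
Your proposal is correct and follows essentially the same route as the paper: ship each $G_r[W_i]$ to a distinct $w_i$ via \lra, bound the receive load by $O(n)$ per $w_i$ using Lemma~\ref{lemma:edgew}, bound the send load by $O(n^{1/4}\log n)$ per node, and get one-hop domination of $W$ directly from maximality of each $X_i$ in $G_r[W_i]$. The only difference is your Chernoff bound on the number of indices $i$ with $v \in W_i$, which is unnecessary: the paper simply uses the trivial bound of $\lceil 2\log n\rceil$ layers, which already yields the same $O(n^{1/4}\log n)$ send bound.
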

\begin{proof}
  We argue that Line~5 can be implemented in $O(1)$ rounds w.h.p.
  By Lemma~\ref{lemma:edgew}, each node has to send at most $O(n^{1/4})$ messages to $w_i$ and w.h.p.~each$w_i$ receives at most $O(n)$ messages. Therefore by \lra\ Line 5 takes $O(1)$ rounds. 
  To repeat this for each $i  = 1, 2, \ldots, \lceil 2\log n \rceil$ in parallel, 
  every node 
  has to send at the most $\lceil 2\log n \rceil \cdot n^{1/4}$ messages. 
  Since $w_i$'s are distinct no $w_i$ needs to receive more than $O(n)$ 
  messages.   

  Each $v \in W$ belongs to $W_i$ for some $i$ and is therefore at 
  most one hop from some vertex in $X_i$.
\end{proof}

\begin{lemma}\label{lemma:qrule}
  W.h.p.~it takes constant number of rounds to compute $Q$.
  Furthermore, $Q$ is a 2-ruling set of $G_r[W]$.
\end{lemma}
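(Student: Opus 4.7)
The proof decomposes naturally into two tasks: bounding the running time of the call $Q \leftarrow \textsc{SW-MIS}(G_r[X])$ on Line~7, where $X := \cup_i X_i$, and verifying that the resulting $Q$ is indeed a 2-ruling set of $G_r[W]$. Combined with Lemma~\ref{lemma:wimis}, which already handles Lines~4--6, this will establish both claims.

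For the running time, the plan is to bound $\Delta(G_r[X])$ and then invoke Theorem~\ref{thm:logstar}. Fix $v \in X$ and, for a given $i$, consider the set $N_{G_r}(v) \cap X_i$. Since $X_i$ is an MIS of $G_r[W_i]$, any two of its elements are at pairwise distance strictly greater than $r$; but both are within distance $r$ from $v$, so by the triangle inequality their mutual distance is at most $2r$. Hence $N_{G_r}(v) \cap X_i$, viewed as a subset of the metric space, has aspect ratio at most $2$ and, by the growth-bounded property, cardinality at most $2^{\rho} = O(1)$. Summing over the $\lceil 2 \log n \rceil$ indices gives $\Delta(G_r[X]) = O(\log n)$, which is comfortably below the threshold $O(\sqrt{n}/(\log^* n)^{\rho/2})$ demanded by Theorem~\ref{thm:logstar}. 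The metric restricted to $X$ retains constant doubling dimension, so \textsc{SW-MIS} computes an MIS $Q$ of $G_r[X]$ in $O(1)$ rounds.

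For the ruling-set claim, first observe that $Q \subseteq X \subseteq W$, so independence of $Q$ in $G_r[W]$ is inherited from independence in $G_r[X]$ (an induced subgraph). Now take any $w \in W$. By Lemma~\ref{lemma:wimis}, there exists $x \in X$ with either $x = w$ or $\{w,x\} \in E_r$; the edge, when present, lies in $G_r[W]$ since $x \in X \subseteq W$. Because $Q$ is an MIS of $G_r[X]$, either $x \in Q$ or some $q \in Q$ satisfies $\{x, q\} \in E_r$, and this edge again lies in $G_r[W]$ since $Q \subseteq W$. Concatenating the two (possibly trivial) edges produces a path of length at most $2$ in $G_r[W]$ from $w$ to a vertex of $Q$.

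The main obstacle is establishing $\Delta(G_r[X]) = O(\log n)$ so that Theorem~\ref{thm:logstar} becomes applicable; once this geometric bound is in hand, the ruling-set property is a routine edge-tracing argument through the containment chain $Q \subseteq X \subseteq W$. The key geometric insight is that within any single MIS $X_i$ a vertex can have only $O(1)$ neighbors, independent of $n$, so the $O(\log n)$ oversampling factor from the repeated $W_i$'s is the only source of degree growth.
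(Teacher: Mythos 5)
Your proposal is correct and follows essentially the same route as the paper's proof: bound the degree of each vertex in $G_r[\cup_i X_i]$ within a single $X_i$ by a constant via the growth-bounded/aspect-ratio argument, sum over the $\lceil 2\log n\rceil$ indices to get an $O(\log n)$ maximum degree so that Theorem~\ref{thm:logstar} applies, and then obtain the 2-ruling-set property by chaining one hop from $w \in W$ to some $x \in X_i$ (Lemma~\ref{lemma:wimis}) with one hop from $x$ to $Q$ via maximality of $Q$ in $G_r[X]$. Your write-up merely makes explicit some details the paper leaves implicit (the aspect-ratio computation and the fact that the connecting edges lie inside $G_r[W]$).
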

\begin{proof}
  Consider a node $v \in \cup_{i=1}^{\lceil 2 \log n\rceil}  {X_i}$.
  Since each $X_i$ is an independent set, by using \aspect\  
  of $G_r[X_i]$, we see that the number of neighbors of $v$ in $X_i$ is bounded
  above by a constant. Hence, the maximum degree in
  $G_r \left[\cup_{i=1}^{\lceil 2 \log n\rceil}  {X_i}\right]$ is $O(\log n)$. 
  Since the maximum degree of this growth-bounded graph is $O(\log n)$, by 
  Theorem~\ref{thm:logstar} an MIS of this graph can be computed in constant 
  rounds by using \textsc{SW-MIS}.

  A node $v \in W$ belongs to some $W_i$ and is therefore at most one hop from
  some node in $X_i$. Also, every node in every $X_i$ is at most one hop from some
  node in $Q$. Also, $Q$ is independent and therefore $Q$ is a 2-ruling set of
  $G_r[W]$.
\end{proof}

\subsection{Phase 4: Ruling Set to MIS}
\label{sub:phase4}
Algorithm \textsc{RulingToMIS} implements Phase 4 of our MIS algorithm. The 
algorithm takes as input the graph $G_r$ and the vertex subset  
$S = Q \cup R$ where $Q$ and $R$ are the outputs of Phase~2 and 
Phase~3, respectively. 
Note that Lemma~\ref{lemma:qrule} implies that $S$ is a 4-ruling set of $G_r$. 
This property is used to cover $G_r$ with balls of radius $4r$, 
centered at members of $S$. 

Consider the graph $G_{9r} = (V, E_{9r})$ where $E_{9r} = \{\{u,v\} \mid u, 
v \in V \mbox{ and } d(u, v) \leq 9r\}$.
In Lemma~\ref{lemma:coloring} we prove a constant upper bound on the maximum degree 
$\Delta(G_{9r}[S])$. This allows us to compute a proper vertex coloring of $G_{9r}[S]$
using a constant number of colors.
This coloring guides the rest of the algorithm, providing a  
schedule for processing the vertices in the aforementioned balls centered at vertices in $S$. 
For each color $i$, the algorithm processes all vertices in $S$ colored $i$ in parallel.
For each vertex $v \in S$ colored $i$, let $B_v$ denote the subset of $B(v, 4r)$ of vertices
still ``active''. 
The algorithm computes an MIS of the induced subgraph $G_r[B_v]$; this computation occurs in
parallel for each $v$ colored $i$.
Since the vertex coloring is with respect to $G_{9r}$, two balls $B_v$ and $B_{v'}$ that are
processed in parallel do not intersect and in fact are not even connected by an edge.
Thus processing in parallel all of the balls $B_v$ for $v$ colored $i$ has no untoward
consequences. 
We note that due to the growth bounded property, every independent set of $G_r[B_v]$ has a 
constant number of vertices.
Hence, we can use a simple sequential algorithm to compute an MIS of $G_r[B_v]$ -- repeatedly
each vertex with smallest ID in its neighborhood joins the MIS and the graph is updated.
We call this MIS algorithm \textsc{SequentialMIS} and use it in Line~9 in Algorithm 
\textsc{RulingToMIS}.
Since every vertex in $V$ is at distance at most $4r$ from some vertex in $S$, every vertex
in $V$ is is some ball $B_v$ and is eventually processed.


\begin{algorithm}[t]
  \caption{\textsc{RulingToMIS} (Phase 4) \label{algo:rulingMIS}}
  \begin{boxedminipage}{\textwidth}
    \small
    \begin{algorithmic}[1]
      \REQUIRE $(G_r, S=Q\cup R)$ 
      \ENSURE A maximal independent set $I \subseteq V$ of $G_r$
      \STATE $E_{9r} \leftarrow \left\{\left\{u, v\right\} \mid u, v \in S 
      \mbox{ and } d(u,v) \leq 9r\right\}$ 
      \STATE $G_{9r}[S] \leftarrow (S, E_{9r})$
      \STATE Send $G_{9r}[S]$ to a vertex $v^*$ with lowest ID in $S$
      \STATE Vertex $v^*$ executes $\Psi \leftarrow 
      \textsc{LocalColoring}(G_{9r}[S])$
      with color pallet $\{1,2,\ldots, \gamma+1\}$. Here
      $\gamma$ is the constant from Lemma~\ref{lemma:coloring}. 
      \STATE $V' \leftarrow V$
      \FOR{$i = 1$ \TO $i = \gamma+1$}
      \FORALL{$v \in S$ such that $\Psi(v) = i$ \textbf{in parallel}}
      \STATE $B_v \leftarrow \left\{u \mid u \in V' \mbox{ and } d(u, v) \leq 
    4r\right\}$
    \STATE $I_v \leftarrow \textsc{SequentialMIS}(G_{9r}[B_v])$
  \ENDFOR
  \STATE $V' \leftarrow V'\setminus \left( \cup_{v \in S \wedge 
  \Psi(v) = i}\left( N(I_v) \right)\right)$ 
\ENDFOR
\STATE $I \leftarrow \union_{v\in S} I_v$
\RETURN $I$
\end{algorithmic}
 \end{boxedminipage}
\end{algorithm}

\begin{lemma}\label{lemma:coloring}
  $\Delta(G_{9r}[S]) \leq \gamma$, where $\gamma$ is a constant. 
\end{lemma}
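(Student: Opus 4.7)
The plan is to exploit the fact that $S = Q \cup R$ is an independent set of $G_r$, so any two distinct members of $S$ are at metric distance strictly greater than $r$. I first want to verify this independence claim: Lemma~\ref{lemma:qrule} gives that $Q$ is independent in $G_r$; Phase~3 guarantees that $R$ is an MIS of $G_r[V']$ and hence independent in $G_r$; and because $V' = V \setminus (W \cup N(W))$ with $Q \subseteq W$, no vertex of $R$ is a $G_r$-neighbor of any vertex of $Q$. Therefore $S$ is a $G_r$-independent set, so for every pair of distinct $u, v \in S$ we have $d(u,v) > r$.

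Next, fix an arbitrary $u \in S$ and let $Y := (N_{G_{9r}[S]}(u) \cup \{u\}) \subseteq S$. Every $v \in Y$ satisfies $d(u,v) \le 9r$, so by the triangle inequality any two points of $Y$ are within distance $18r$ of each other. Combined with the lower bound $d(v,w) > r$ for distinct $v, w \in Y \subseteq S$ from the previous paragraph, we conclude that the aspect ratio satisfies $\lambda(Y) < 18r / r = 18$, and in particular $\lceil \log_2 \lambda(Y) \rceil \le 5$.

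Now I would apply the growth-bounded property of the underlying metric $(V,d)$ of doubling dimension $\rho$, as stated in the Technical Preliminaries: any $Y \subseteq V$ satisfies
\[
|Y| \;\le\; 2^{\rho \cdot \lceil \log_2 \lambda(Y) \rceil} \;\le\; 2^{5\rho}.
\]
Hence $|N_{G_{9r}[S]}(u)| \le 2^{5\rho} - 1$, and setting $\gamma := 2^{5\rho} - 1$ (a constant depending only on the doubling dimension) yields $\Delta(G_{9r}[S]) \le \gamma$ as required.

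The only real obstacle is the independence verification in the first paragraph — one must be careful to trace through the definitions of $Q$, $R$, and $V'$ to rule out any $G_r$-edge between $Q$ and $R$. Once that is nailed down, the rest is a direct application of the doubling-dimension packing bound with the aspect-ratio estimate coming from the independence of $S$ in $G_r$ and the $9r$-radius neighborhood.
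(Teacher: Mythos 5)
Your proof is correct and follows essentially the same route as the paper: bound the pairwise distances in $N_{G_{9r}[S]}(u) \cup \{u\}$ between $r$ (from independence of $S$ in $G_r$) and $18r$ (triangle inequality), then invoke the growth-bounded property of the doubling metric to bound the neighborhood size by a constant depending only on $\rho$. Your explicit verification that $S = Q \cup R$ is independent in $G_r$ is a welcome extra detail that the paper compresses into a citation of Lemma~\ref{lemma:qrule}; otherwise the two arguments coincide (up to the exact constant, $2^{5\rho}$ versus $18^{\rho}$).
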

\begin{proof}
  Consider any node $v \in S$ and neighbors $N_{G_{9r}}(v)$ of $v$
  in $G_{9r}[S]$. By the triangle inequality, any pair of nodes in 
  $N_{G_{9r}}(v)$ are at most distance $18r$ apart and by 
  Lemma~\ref{lemma:qrule}, at least distance $r$ 
  apart. Hence $N_{G_{9r}}(v) \cup v$ has a constant aspect
  ratio and by \aspect, we have $|N_{G_{9r}}(v) \cup v| \leq 18^{\rho} = 
  \gamma$.
\end{proof}
\begin{lemma}\label{lemma:constantRounds}
Algorithm \textsc{RulingToMIS} executes in a constant number of rounds.
\end{lemma}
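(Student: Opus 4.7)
The plan is to show that every line of Algorithm~\textsc{RulingToMIS} can be executed in $O(1)$ congested-clique rounds; since the loop at Line~6 has only $\gamma+1 = O(1)$ iterations by Lemma~\ref{lemma:coloring}, this yields the overall constant-round bound.

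For Lines~1--4, I would first have every node in $S$ announce its membership in one round of broadcast, after which each $v \in S$ can locally identify its neighbors in $G_{9r}[S]$ from the metric distances it already stores. By Lemma~\ref{lemma:coloring}, $\Delta(G_{9r}[S]) \le \gamma$, so $|E(G_{9r}[S])| = O(n)$ and each $v \in S$ has only $O(1)$ incident edges to report; shipping all of these to $v^*$ therefore fits within the hypotheses of \lra\ and takes $O(1)$ rounds. After $v^*$ locally computes a proper $(\gamma+1)$-coloring $\Psi$, one more application of \lra\ delivers each $v \in S$ its assigned color.

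Next, for a fixed color class $i$ in the main loop, the essential geometric observation is that any two centers $v, v'$ with $\Psi(v) = \Psi(v') = i$ satisfy $d(v, v') > 9r$, so the triangle inequality forces $B(v, 4r) \cap B(v', 4r) = \emptyset$. The balls $B_v$ for this color class are therefore pairwise disjoint, and each active $u \in V'$ belongs to at most one of them. Each such $u$ sends a single identification message to its unique center, so the aggregate send/receive load is $O(n)$ and \lra\ completes the ball-identification step in $O(1)$ rounds.

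The main obstacle is that $|B_v|$ can be as large as $\Theta(n)$, which prevents $v$ from simply gathering $G_r[B_v]$ and running an MIS algorithm purely locally. I would circumvent this by implementing \textsc{SequentialMIS} in a distributed rank-based fashion: since $B_v$ lies within a metric ball of radius $4r$, \aspect\ bounds every independent set inside $B_v$ by a constant, so the loop terminates in $O(1)$ greedy iterations; in each iteration, every currently-active $u \in B_v$ sends its ID to $v$ via \lra, $v$ picks the minimum-ID vertex $u^*$ into $I_v$ and broadcasts this choice, and $u^*$ then directly notifies its $G_r$-neighbors (known locally from the stored distances) to deactivate, which simultaneously effects the update $V' \leftarrow V' \setminus N(I_v)$ at Line~11. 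Multiplying the $O(1)$ rounds per inner iteration by the $O(1)$ MIS size and the $O(1)$ number of color classes gives the claimed constant overall round complexity.
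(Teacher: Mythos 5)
Your proposal is correct and follows essentially the same route as the paper: ship the constant-degree graph $G_{9r}[S]$ to $v^*$ via \lra, color with $\gamma+1 = O(1)$ colors, and use the constant bound (via \aspect) on independent sets inside each ball $B_v$ to bound the work per color class by $O(1)$ rounds. Your distributed, rank-based implementation of \textsc{SequentialMIS} (rather than gathering $G_r[B_v]$ at $v$) simply makes explicit an implementation detail that the paper treats tersely, and it is a valid way to realize the same argument.
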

\begin{proof}
  Since the maximum degree of $G_{9r}[S]$ is a constant, the entire description 
  of $G_{9r}[S]$ can be shipped to a designated vertex $v^*$ (e.g., a vertex with 
  the smallest ID) using \lra~in $O(1)$ rounds. Then $v^*$ can compute a coloring 
  of $G_{9r}[S]$ such that no two adjacent vertices have the same color. Notice 
  that the maximum degree of $G_{9r}[S]$ is bounded above by $\gamma$, hence 
  $\gamma+1$ colors are sufficient. 

  The constant upper bound on the size of the color palette implies that the 
  \textbf{for}-loop starting in Line~6 executes a constant number of iterations.
  In each iteration $i$, all nodes $v \in S$ colored $i$ are processed. Specifically, 
  an MIS of $G_r[B_v]$ is computed and since the size of every independent set in $G_r[B_v]$
  is bounded above by a constant (by appealing to \aspect), Algorithm \textsc{SequentialMIS}
  terminates in constant rounds. Hence, each iteration of the outer-\textbf{for}-loop takes
  a constant number of rounds of communication.
\end{proof}

\begin{lemma}\label{lemma:correctness}
The set $I$ computed by Algorithm \textsc{RulingToMIS} is an MIS of $G_r$.
\end{lemma}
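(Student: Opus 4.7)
The plan is to establish separately the two defining properties of an MIS: that $I$ is independent in $G_r$ and that $I$ is maximal. Throughout, I will treat $N(\cdot)$ in Line~11 of Algorithm \textsc{RulingToMIS} as the closed neighborhood, so that both $I_v$ and all its $G_r$-neighbors are excised from $V'$ after iteration $\Psi(v)$; I would flag this convention at the start of the proof.

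For independence, I would argue by induction on the outer-for-loop index $i$ that after iteration $i$, the set $\bigcup_{v \in S,\,\Psi(v)\le i} I_v$ is independent in $G_r$. The within-iteration case is where the coloring $\Psi$ does its work: if $v,v' \in S$ are distinct and share color $i$, they are non-adjacent in $G_{9r}[S]$, so $d(v,v') > 9r$. Any $u \in B_v$ and $u' \in B_{v'}$ then satisfy, by the triangle inequality,
\[
d(u,u') \;\ge\; d(v,v') - d(v,u) - d(v',u') \;>\; 9r - 4r - 4r \;=\; r,
\]
so $\{u,u'\} \notin E_r$ and $I_v \cup I_{v'}$ is independent. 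Each $I_v$ is itself independent as the output of \textsc{SequentialMIS}. The across-iteration case is handled by the removal step: if $w \in I_{v'}$ has been committed in an earlier iteration, then every $G_r$-neighbor of $w$ (and $w$ itself) has been removed from $V'$ and hence lies outside every future $B_v$, so no subsequent $I_v$ can contain a neighbor of $w$.

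For maximality, I would invoke the property (established earlier in Section~\ref{subsection:constantRoundMIS}) that $S = Q \cup R$ is a $4$-ruling set of $G_r$: every $u \in V$ is within $4$ hops in $G_r$, and hence within metric distance $4r$, of some $v \in S$. Fix such a $v$ and consider the moment when iteration $\Psi(v)$ begins. Either $u \in V'$ at that moment, in which case $u \in B_v$ and the MIS $I_v$ of $G_r[B_v]$ produced by \textsc{SequentialMIS} either contains $u$ or contains some $G_r$-neighbor of $u$; or $u$ was removed from $V'$ in an earlier iteration, which by Line~11 happens precisely when $u$ lies in the closed $G_r$-neighborhood of some committed $I_{v'} \subseteq I$. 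In either case $u \in I$ or $u$ has a neighbor in $I$, proving maximality.

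The step I expect to require the most care is the across-iteration independence argument, because it hinges on the (slightly informal) interpretation of $N(I_v)$ in Line~11. If $N(\cdot)$ were the open neighborhood, a vertex $w \in I_{v'}$ committed earlier could reappear in some later $B_v$ and be excluded from $I_v$ by a neighbor, breaking independence of the accumulated set. Making the closed-neighborhood convention explicit, and observing that under it no future $B_v$ ever contains a neighbor of the committed portion of $I$, is the crux; the remaining pieces are then direct consequences of the coloring guarantee from Lemma~\ref{lemma:coloring} and the $4$-ruling-set property of $S$.
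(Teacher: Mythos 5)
Your proof is correct and follows essentially the same route as the paper's: independence comes from the $\gamma+1$-coloring of $G_{9r}[S]$ (Lemma~\ref{lemma:coloring}) plus the triangle-inequality bound $4r + r + 4r \le 9r$ (you argue the contrapositive, the paper argues by contradiction), and maximality comes from $S$ being a $4$-ruling set together with the removal step in Line~11, just as in the paper. Your closed-neighborhood caveat is a fine convention to adopt, though it is not actually needed for independence: even with an open neighborhood, once $w \in I_{v'}$ is committed all of $w$'s $G_r$-neighbors are deleted from $V'$, so later balls can contain $w$ itself (harmlessly re-selected, since it is isolated in the induced subgraph) but never a neighbor of $w$.
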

\begin{proof}
First we show that $I$ is an independent set by contradiction.
Suppose that for some $p, q \in I$, $p$ and $q$ are adjacent in $G_r$.
Then it must be the case that both $p$ and $q$ were selected in the same
iteration of the outer-\textbf{for}-loop; otherwise, the selection of one
of the two nodes would render the other unavailable for selection.
If $p$ and $q$ are selected in the same outer-\textbf{for}-loop iteration,
it must be the case that $p \in B_v$ and $q \in B_{v'}$ where $v \not= v'$, but
$v$ and $v'$ have the same color.
Since $d(p, v) \le 4r$, $d(q, v') \le 4r$, and $d(p, q) \le r$, using the triangle
inequality we see that $d(v, v') \le 9r$.
But, if this is the case then there is an edge between $v$ and $v'$ in $G_{9r}[S]$
and these two vertices would not have the same color, contradicting our earlier
conclusion that $v$ and $v'$ have the same color.

We now prove that $I$ is maximal.
Since $S$ is a 4-ruling set of $G_r$, every node $u \in V$ is in $B(v, 4r)$ for
some $v \in S$.
Suppose that $v$ is colored $i$ and therefore $B_v$ is processed in iteration $i$ of
the outer-\textbf{for}-loop.
If $u \in B_v$ then Algorithm \textsc{SequentialMIS} will either pick $u$ or a neighbor
to join the MIS.
Otherwise, if $u \not\in B_v$ then it must be the case that in an earlier iteration
of the outer-\textbf{for}-loop, either $u$ or a neighbor were selected to be in
the MIS.
\end{proof}

\section{Constant-Approximation to MST in Constant Rounds} 
\label{sec:mst}

For a metric space $(V,d)$, define a \textit{metric graph} $G = (V, E)$ as the clique on set $V$ with each edge $\left\{u,v\right\}$ 
having weight $d(u,v)$. 
In this section we present a constant-round algorithm for computing a constant-factor 
approximation of an MST of given metric graph $G = (V, E)$
with constant doubling dimension.
We require that at the end of the MST algorithm, each node in $V$ know the entire spanning tree.
Our overall approach is as follows.
We start by showing how to ``sparsify'' $G$ and construct a spanning subgraph $\Ghat = (V, \ehat)$, $\ehat \subseteq E$, such that
$wt(MST(\Ghat)) = O(wt(MST(G)))$. Thus computing an MST on $\Ghat$ yields an $O(1)$-approximation to an MST on $G$.
The sparsification is achieved via the construction of a collection of maximal
independent sets (MIS) \textit{in parallel} on different distance-threshold subgraphs of $G$.
Thus we have reduced the problem of constructing a constant-approximation of an MST on the metric graph $G$ to two problems:
(i) the MIS problem on distance-threshold graphs and (ii) the problem of computing an MST of 
a sparse graph $\Ghat$.
Using the fact that the underlying metric space $(V, d)$ has constant doubling dimension, we show that $\Ghat$ has
linear (in $|V|$) number of edges. As a result, problem (ii) can be easily solved in constant number of rounds by simply shipping $\Ghat$ to a single node for local MST computation.
In Section~\ref{sec:mis_in_doubling}, we have already shown how to compute an MIS of a 
distance-threshold graph 
in a constant doubling dimensional space on a congested clique in constant number of rounds.
Finally, we show that due to the particular bandwidth usage of our MIS algorithm, we can run all
of the requisite MIS computations in parallel in constant rounds.


\subsection{MST Algorithm}
\label{subsection:algorithm}

We now present our algorithm in detail; the reader is encouraged to follow along the pseudocode in Algorithm \ref{algo:construct}.
We partition the edge set $E$ of the metric graph into two subsets $E_\ell$ (\textit{light} edges) and
$E_\ehh$ (\textit{heavy} edges) as follows. Let $d_m = \max\left\{d(u,v) \mid
  \left\{u,v\right\} \in E  \right\}$ denote the diameter of the metric space
\footnote{If the size of the encoding of distances is more than $O(\log n)$ bits then it is suffices to know only most-significant $\log n$-bits of encoding of $d_m$ to act as ``proxy'' for $d_m$ which will only increase the approximation factor by a constant.}.  
  Define $E_\ell  =  \left\{\left\{u,v\right\} \mid d(u,v) \leq 
d_m/n^3\right\} $ and $E_\ehh =  E \setminus E_\ell$.
We deal with these two subsets $E_\ell$ and $E_\ehh$ separately. 
\begin{algorithm}[t]
  \caption{\textsc{MST-Approximation} \label{algo:construct}}
  \begin{boxedminipage}{\textwidth}
    \small
    \begin{algorithmic}[1]
      \REQUIRE A metric graph $G=(V,E)$ on metric space $(V,d)$
      \ENSURE A tree $\cthat$ such that $wt(\cthat) = 
      O\left(wt\left(MST\left(G\right)\right)\right)$
      \STATE $d_m = \max\{d(u, v) \mid \{u, v\} \in E\}$
      \STATE $E_\ell \leftarrow \left\{\left\{u,v\right\} \mid d(u,v) \leq \frac{d_m}{n^3}\right\}$ \COMMENT{Processing light edges}
      \STATE $S \leftarrow $ \textsc{ComputeMIS}$(G[E_0])$ where $E_0 \leftarrow \left\{\left\{u,v\right\} \mid d(u,v) \leq \frac{d_m}{n^2}\right\}$
      \STATE $\ehat_\ell \leftarrow \left\{\left\{u,v\right\} \mid u \in S \mbox{ and } d(u,v) \leq \frac{2\cdot d_m}{n^2}\right\}$ 
      \STATE $E_\ehh \leftarrow \left\{\left\{u,v\right\} \mid d(u, v) > \frac{d_m}{n^3}\right\}$ \COMMENT{Processing heavy edges}
      \STATE $h \leftarrow \left\lceil \frac{3\log n}{\log c_1} \right\rceil$; $r_0 \leftarrow \frac{d_m}{c_1^h}$
      \FOR{$i=1$ \TO $h$ \textbf{in parallel}} \label{algo:construct:forstart}
      \STATE $r_i \leftarrow (c_1)^i \cdot r_0$
      \STATE $E_i \leftarrow \left\{\left\{u,v\right\} \mid d(u,v) \leq r_i\right\}$
    \STATE $V_i \leftarrow $\textsc{ComputeMIS}$(G[E_i])$
    \STATE $\ehat_i \leftarrow \left\{\left\{u,v\right\} \mid u, v\in V_i \mbox{ 
    and } d(u,v) \leq c_2\cdot r_i\right\}$ 
  \ENDFOR \label{algo:construct:forend}
  \STATE $\ehat_\ehh \leftarrow \cup_{i=1}^{h} \ehat_i$; $\ehat \leftarrow \ehat_\ell \cup \ehat_\ehh$ 
  \RETURN \textsc{MST-Sparse}$(G[\ehat])$ \label{algo:construct:sparsemst} 
\end{algorithmic}
  \end{boxedminipage}
\end{algorithm}

First consider the set of light edges $E_\ell$ and note that
$G[E_\ell]$ may have several components. We would like to select an edge set 
$\ehat_\ell$ such that \begin{inparaenum}[(i)] 
\item any pair of vertices that are in the same connected component in 
  $G[E_\ell]$ are also in the same connected component in $G[\ehat_\ell]$, and 
\item $wt(\ehat_\ell) = O(wt(MST(G)))$. \end{inparaenum} 
(Note that one can define $\ehat_\ell = E_\ell$ to have these two properties but 
we want to ``sparsify'' $E_\ell$, ideally we would like to have $|\ehat_\ell| = O(n)$ and we show this for metric with constant doubling dimension.) 
The algorithm for selecting $\ehat_\ell$ is as follows.
Let $S$ be an MIS of the distance-threshold graph $G_r$,
where $r = d_m/n^2$.
(This MIS computation is not on graph induced by $E_\ell$, notice the $r$. 
This is done to obtain certain properties of $\ehat_\ell$ described above.)  
Define $\ehat_\ell = \left\{\left\{u,v\right\} \mid u \in S \mbox{ and } d(u,v) \leq 
  2\cdot d_m/n^2 \right\}$. 
Note that $\ehat_\ell$ may not be a subset of $E_\ell$. 

Now we consider the set $E_\ehh$ of heavy edges. 
Let $c_1 > 1$ be a constant.  
Let $h$ be the smallest positive integer such that $c_1^h \geq n^3$.
Observe that $h = \left\lceil \frac{3\log n}{\log c_1} \right\rceil$.
Let $r_0 = {d_m}/{c_1^h}$ (note that for any heavy edge $\{u, v\}$, $d(u, v) > r_0$) and let $r_i = c_1 \cdot r_{i-1}$, for $i>0$. 
We construct $\ehat_\ehh$ in \textit{layers} as follows. 
Let $V_0 = V$ and $V_i$ for $0<i\leq h$ is an MIS of the subgraph $G[E_i]$ where $E_i = \left\{\left\{u,v\right\} \mid d(u,v) \leq r_i\right\}$. 
Let $c_2 > c_1+2$ be a constant. 
Define $\ehat_i$, the edge set at the layer $i$ as: 
$\ehat_i = \left\{\left\{u,v\right\} \mid u,v \in V_i \mbox{ and } d(u,v) \leq 
c_2\cdot r_i\right\}$.
We define $\ehat_\ehh = \cup_{i=1}^h \ehat_i$ and $\ehat = \ehat_\ehh \cup
\ehat_\ell$. 
A key feature of our algorithm is that a layer $\ehat_i$ does not depend on
other layers and therefore these layers can be constructed in parallel.
We then call an as-yet-unspecified algorithm called \textsc{MST-Sparse} 
that quickly computes an exact MST of $\Ghat = G[\ehat]$ in the congested clique model.



In the analysis that follows, we separately analyze the processing of light edges and heavy edges.
We first show the \textit{constant-approximation property} of $\Ghat$ which doesn't require metric to be of constant doubling dimension. Later we show if the underlying metric has constant doubling dimension then Algorithm~\ref{algo:construct} runs in constant rounds w.h.p.. 

\subsection{Constant-Approximation Property}\label{sub:weight}
Let $\ct$ be an MST of graph $G=(V,E)$. Let $\cthat$ be a MST of the graph 
$\Ghat = (V,\ehat)$. 
We now prove that $wt(\cthat)=O(wt(\ct))$. 
First we claim that the connectivity that edges in $E_\ell$ (i.e., the light edges)
provide is preserved by the edges selected into $\ehat_\ell$ (Lemma~\ref{lemma:connectivity})
and the total weight of these selected edges is not too high (Lemma~\ref{lemma:wtsmall}). 
Later we prove a similar claim for heavy edges (Lemma~\ref{lemma:wtbig}). 
\begin{lemma}\label{lemma:connectivity}
  For any vertices $s$ and $t$ in $V$, if
  there is a $s$-$t$ path in $G[E_\ell]$ then there exists an $s$-$t$ path
  in $G[\ehat_\ell]$.
\end{lemma}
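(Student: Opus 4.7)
The plan is to establish the lemma by induction along the given $s$-$t$ path in $G[E_\ell]$. Specifically, I would prove the stronger local claim: for every single light edge $\{u,v\} \in E_\ell$, the endpoints $u$ and $v$ lie in the same connected component of $G[\ehat_\ell]$. Once this local claim is in hand, if $s = v_0, v_1, \ldots, v_k = t$ is a path in $G[E_\ell]$, then each consecutive pair $v_{j-1}, v_j$ is in the same component of $G[\ehat_\ell]$, and transitivity of the connectivity relation immediately yields an $s$-$t$ path in $G[\ehat_\ell]$.

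The key observation driving the local claim is that every light edge automatically lies in $E_0$: since $d(u,v) \le d_m/n^3 \le d_m/n^2$, the pair $\{u,v\}$ is an edge of $G[E_0]$. Because $S$ is an MIS of $G[E_0]$, two consequences follow. First, $u$ and $v$ cannot both belong to $S$ (else they would be adjacent in $G[E_0]$, violating independence). Second, any endpoint not in $S$ has some neighbor in $S$ within $G[E_0]$, i.e., within distance $d_m/n^2$.

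I would then split into two cases. If one of the endpoints, say $u$, lies in $S$, then $d(u,v) \le d_m/n^3 \le 2d_m/n^2$ shows $\{u,v\} \in \ehat_\ell$ directly from the definition of $\ehat_\ell$, and $u,v$ are joined by a single edge. The more interesting case is $u, v \notin S$. Pick any $u' \in S$ witnessing maximality at $u$, i.e., with $d(u,u') \le d_m/n^2$. The edge $\{u,u'\}$ is in $\ehat_\ell$ because $u' \in S$ and $d(u,u') \le 2d_m/n^2$. Then the triangle inequality gives
\[
d(u',v) \le d(u',u) + d(u,v) \le \frac{d_m}{n^2} + \frac{d_m}{n^3} \le \frac{2d_m}{n^2},
\]
so $\{u',v\}$ is also in $\ehat_\ell$, and $u - u' - v$ is a path in $G[\ehat_\ell]$ connecting $u$ and $v$.

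There is really no serious obstacle here: the only thing to watch is the triangle-inequality calculation above, which is exactly why the definition of $\ehat_\ell$ uses the relaxed radius $2d_m/n^2$ rather than $d_m/n^2$ — the factor of two absorbs both a hop of length $d_m/n^2$ from a non-$S$ endpoint to its MIS representative and the original light edge of length at most $d_m/n^3$. Once this slack is verified, the case analysis and the inductive argument along the $s$-$t$ path complete the proof.
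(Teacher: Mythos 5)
Your proposal is correct and follows essentially the same argument as the paper: for each light edge, either it already lies in $\ehat_\ell$ (when an endpoint is in $S$), or one replaces it by a two-edge detour through a nearby MIS vertex of $G_{d_m/n^2}$, with the triangle inequality and the relaxed radius $2d_m/n^2$ absorbing the slack, and then one chains this along the $s$-$t$ path. No gaps; this matches the paper's proof.
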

\begin{proof}
  Consider an edge $\{u, v\} \in E_\ell$.
  If $\left\{u,v\right\} \in \ehat_\ell$ then we are done. If
  $\left\{u,v\right\} \notin \ehat_\ell$ then we show that
  there exists a vertex $w$ such that $\{u, w\}, \{v, w\}
  \in \ehat_\ell$.
  Since $\{u, v\} \in E_\ell$, $d(u, v) \le d_m/n^3$.
  Furthermore, since $\{u,v\} \notin \ehat_\ell$ it means both $u$ and $v$ are not
  in $S$, an MIS of $G_r$, $r = d_m/n^2$.
  Hence there is a vertex $w\in S$ such that
  $d(u, w) \leq d_m/n^2$.
  By the definition of $\ehat_\ell$, $\{u, w\} \in \ehat_\ell$.
  By the triangle inequality, we have $d(v, w) \leq
  d_m/n^2 + d_m/n^3$ which implies $\{v, w\} \in \ehat_\ell$.
  The lemma follows by repeatedly applying above result to each edge of the given $s$-$t$ path.
\end{proof}
\begin{lemma}\label{lemma:wtsmall}
  $wt(\ehat_\ell) = O(wt(\ct))$.
\end{lemma}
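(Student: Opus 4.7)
The plan is to bound $wt(\ehat_\ell)$ crudely by combining a count of edges in $\ehat_\ell$ with the uniform upper bound on edge weights, and then compare this bound directly to $d_m$, which is a lower bound on $wt(\ct)$.

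First I would observe two structural facts about $\ehat_\ell$. By definition, every edge $\{u,v\} \in \ehat_\ell$ has $d(u,v) \le 2 d_m/n^2$, so every edge has weight at most $2 d_m/n^2$. Also, every edge in $\ehat_\ell$ has at least one endpoint in $S$, so the total number of edges satisfies $|\ehat_\ell| \le |S| \cdot (n-1) \le n^2$. Multiplying these two bounds yields
\[
wt(\ehat_\ell) \;\le\; |\ehat_\ell| \cdot \frac{2 d_m}{n^2} \;\le\; n^2 \cdot \frac{2 d_m}{n^2} \;=\; 2 d_m.
\]

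Next I would connect $d_m$ back to $wt(\ct)$. Since $\ct$ is a spanning tree of the metric graph $G$, it must in particular connect the pair $\{u^*, v^*\}$ realizing $d(u^*, v^*) = d_m$. By the triangle inequality applied along the unique $u^*$-$v^*$ path in $\ct$, the sum of the edge weights on that path is at least $d_m$. Hence $wt(\ct) \ge d_m$, so $wt(\ehat_\ell) \le 2 d_m \le 2 \cdot wt(\ct) = O(wt(\ct))$.

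There is no real obstacle here; the only thing to be careful about is that the MIS $S$ is computed on the distance-threshold graph $G_r$ with $r = d_m/n^2$ rather than on $G[E_\ell]$, but this distinction plays no role in the weight bound, which only uses $|S| \le n$ and the explicit distance cutoff in the definition of $\ehat_\ell$. The sharper bound $|\ehat_\ell| = O(n)$, which is needed later for the running time in metrics of constant doubling dimension, is not required for this lemma and would be established separately.
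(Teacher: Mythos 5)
Your proof is correct and follows essentially the same argument as the paper: bound each edge weight by $2d_m/n^2$, the number of edges trivially by $n^2$, and use the fact that any spanning tree has weight at least $d_m$. No further comment is needed.
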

\begin{proof}
  The weight of each edge in $\ehat$ is at most $2d_m/n^2$ and since there are
  at most $n^2$ edges in $\ehat_\ell$ (trivially), we see that $wt(\ehat_\ell) = O(d_m)$.
  We obtain the lemma by using the fact that the total weight of any spanning tree is bounded below by $d_m$.
\end{proof}

Consider an edge $\left\{u,v\right\} \in E(\ct)$. Let  
$C(u)$ and $C(v)$ be the components containing $u$ and $v$ respectively 
in the graph $\ct \setminus \left\{u, v\right\}$. 

\begin{lemma}\label{lemma:wtbig}
  If $\left\{u,v\right\} \in E(\ct) \cap E_\ehh$ then 
  there exists an edge $\{u',v'\} \in \ehat$ such that (i) $d(u', v') \le c_2 \cdot d(u, v)$ 
  and (ii) $u' \in C(u)$ and $v' \in C(v)$.
\end{lemma}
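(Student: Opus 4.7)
The plan is to locate the MST edge $\{u,v\}$ in the scale hierarchy used to build $\ehat$ and extract the required edge from the MIS $V_i$ at the matching scale. Since $\{u,v\}$ is heavy, $d(u,v) > d_m/n^3 \ge r_0$, and since $d(u,v) \le d_m = r_h$, there is a unique index $i \in \{1, \ldots, h\}$ satisfying $r_{i-1} < d(u,v) \le r_i$; equivalently, $r_i < c_1 \cdot d(u,v)$.

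I would then invoke the maximality of $V_i$ in $G[E_i]$: $V_i$ is a dominating set at scale $r_i$, so there exist $u', v' \in V_i$ with $d(u, u') \le r_i$ and $d(v, v') \le r_i$ (taking $u' = u$ whenever $u \in V_i$, and similarly for $v'$). The triangle inequality gives
\[d(u', v') \le d(u', u) + d(u, v) + d(v, v') \le 2 r_i + d(u, v) \le 3 r_i,\]
and since $c_2 > c_1 + 2 > 3$, we obtain $d(u', v') \le c_2 \cdot r_i$. Combined with $u', v' \in V_i$, this places $\{u', v'\}$ in $\ehat_i \subseteq \ehat$. Using $r_i < c_1 \cdot d(u,v)$ to re-express the bound in terms of $d(u,v)$, we get $d(u', v') \le 3 r_i < 3 c_1 \cdot d(u, v)$, which is absorbed into $c_2 \cdot d(u,v)$ by an appropriate constant choice consistent with $c_2 > c_1 + 2$.

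The main obstacle is ensuring the component condition $u' \in C(u)$ and $v' \in C(v)$. The plan is to exploit the cut-minimality of the MST edge $\{u,v\}$: since $\{u,v\}$ is in $\ct$, it is the lightest edge across the cut $(C(u), C(v))$, so every other cross-cut edge has weight at least $d(u,v) > r_{i-1}$. In the easy cases the chosen dominators already land in the correct components (or in the swapped configuration, in which case we simply relabel the pair). The delicate case is when both dominators fall into the same component, say both into $C(u)$. Here I would argue, using the MIS domination property together with the cut condition, that one can select an alternative $V_i$-vertex in $C(v)$ close to $v$, and re-apply the triangle inequality to pair it with $u'$ within the $c_2 r_i$ budget. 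This case analysis is the only delicate step of the proof; all remaining arithmetic is a routine distance estimate.
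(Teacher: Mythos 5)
There is a genuine gap, and it lies at the crux of the lemma: your choice of scale. You pick the index $i$ with $r_{i-1} < d(u,v) \le r_i$, i.e., you go one level \emph{up}, so that the dominators $u', v' \in V_i$ are only guaranteed to satisfy $d(u,u'), d(v,v') \le r_i$, where $r_i$ may be as large as (just under) $c_1 \cdot d(u,v) > d(u,v)$. With that choice the cut-minimality of the MST edge gives you nothing about where $u'$ and $v'$ land: an edge such as $\{v, v'\}$ crossing the cut $(C(u), C(v))$ with weight up to $r_i \ge d(u,v)$ does not contradict $\{u,v\}$ being the lightest cross-cut edge. This is exactly your ``delicate case,'' and your sketch offers no argument for it; indeed it is not clear any $V_i$-vertex of $C(v)$ lies close enough to $v$ at that scale, and the natural repair is to drop to the scale below $d(u,v)$ --- which is the paper's proof. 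The paper instead takes $i$ to be the \emph{largest} index with $r_i < d(u,v)$; then $d(u,u') \le r_i < d(u,v)$ and $d(v,v') \le r_i < d(u,v)$, so if, say, $u'$ were in $C(v)$, the edge $\{u,u'\}$ would be a strictly lighter edge between $C(u)$ and $C(v)$ than the MST edge $\{u,v\}$, a contradiction; the component condition (ii) follows immediately, with no case analysis.

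Your scale choice also damages the quantitative claim (i). You obtain $d(u',v') \le 2r_i + d(u,v) \le 3r_i < 3c_1 \cdot d(u,v)$, and then wave at ``an appropriate constant choice''; but the lemma asserts the bound with the constant $c_2$ under only the constraint $c_2 > c_1 + 2$, which does not imply $c_2 \ge 2c_1 + 1$ (take $c_1 = 10$, $c_2 = 13$). With the paper's choice of $i$ one has $d(u,v) \le r_{i+1} = c_1 r_i \le (c_2 - 2) r_i$, so the triangle inequality gives $d(u',v') \le r_i + (c_2-2)r_i + r_i = c_2 r_i$, which simultaneously certifies $\{u',v'\} \in \ehat_i$ (whose definition requires weight at most $c_2 r_i$) and, since $r_i < d(u,v)$, yields $d(u',v') < c_2 \cdot d(u,v)$ exactly as stated. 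So both parts of the lemma hinge on taking the threshold just \emph{below} $d(u,v)$, and the proposal as written does not establish either part under the paper's constants.
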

\begin{proof}
  Let $i$ be the largest integer such that $r_i < d(u, v)$.
  Hence $d(u, v) \le r_{i+1} =  c_1 \cdot r_i \le (c_2-2) \cdot r_i$ (since
  $c_2$ was chosen to be greater than $c_1+2$).

  Let $u'$ and $v'$ be the nearest nodes in the MIS $V_i$ of $G[E_i]$ 
  from $u$ and $v$ respectively. Note that $u'$ could be $u$ and $v'$ could be 
  $v$.
  Thus $d(u,u') \le  r_i$ and  $d(v,v') \le r_i$.
  By the triangle inequality we have, 
  $d(u',v')  \leq  d(u',u) + d(u,v) + d(v,v') \leq  r_i + (c_2 - 2) \cdot r_i + r_i \le c_2\cdot r_i < c_2 \cdot d(u, v).$
  Hence, $(u',v') \in \ehat_i$ and also note that $d(u',v') \leq \alpha\cdot 
  d(u,v)$ where $\alpha$ is any constant greater than $c_2$. 
  Now note that $\{u, v\}$ is the lightest edge between a vertex in $C(u)$ and a vertex in $C(v)$ by 
  virtue of being an MST edge. Therefore, it is the case that $u' \in C(u)$ and $v' \in C(v)$ since 
  $d(u, u') < d(u, v)$ and $d(v, v') < d(u, v)$.
\end{proof}

\noindent
This lemma implies that for every cut $(X, Y)$ of $G$ and an MST edge $\{u, v\}$ that crosses
the cut, there is an edge $\{u', v'\}$ in $\Ghat$ also crossing cut $(X, Y)$ with weight
within a constant factor of the weight of $\{u, v\}$.
The following result follows from this observation and properties of $\ehat_\ell$ proved earlier.
\begin{theorem}\label{thm:weight}
  Algorithm~\ref{algo:construct} computes a spanning tree $\cthat$ of $G$ such that 
  $wt(\cthat) = O\left(wt\left(MST\left(G\right)\right)\right)$. 
\end{theorem}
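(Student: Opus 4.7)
The plan is to upper bound $wt(\cthat)$ by exhibiting a spanning connected subgraph $F$ of $\Ghat$ with $wt(F) = O(wt(\ct))$. Since $\cthat$ is the MST of $\Ghat$, any spanning tree of $(V, F)$ is in particular a spanning subgraph of $(V, \ehat)$, so we would conclude $wt(\cthat) \le wt(F) = O(wt(\ct))$.

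I would define $F := \ehat_\ell \cup \{f(e) : e \in E(\ct) \cap E_\ehh\}$, where for each heavy MST edge $e$, $f(e)$ denotes the substitute edge supplied by Lemma~\ref{lemma:wtbig}. The weight bound is immediate: Lemma~\ref{lemma:wtsmall} gives $wt(\ehat_\ell) = O(wt(\ct))$, and summing the substitute weights yields $\sum_{e \in E(\ct) \cap E_\ehh} c_2 \cdot wt(e) \le c_2 \cdot wt(\ct)$, so $wt(F) = O(wt(\ct))$.

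The remaining work is proving that $(V, F)$ is connected, and this is the main obstacle. The natural strategy is to show that for every edge $\{u, v\} \in \ct$ its endpoints are connected in $(V, F)$, since $\ct$ spans $V$. If $\{u, v\}$ is light, Lemma~\ref{lemma:connectivity} directly supplies a $u$-to-$v$ path in $G[\ehat_\ell] \subseteq (V, F)$. If $\{u, v\}$ is heavy, the substitute $f(\{u,v\}) = \{u', v'\}$ lies in $F$ with $u' \in C(u)$ and $v' \in C(v)$, bridging the two sides of the $\ct$-cut defined by removing $\{u,v\}$. The subtlety is that in general $u' \neq u$ and $v' \neq v$, so we must additionally connect $u$ to $u'$ within $C(u)$ (and symmetrically $v$ to $v'$ within $C(v)$) using edges of $F$. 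The cleanest route is induction on the number of heavy edges in $\ct$: the two sub-trees on $C(u)$ and $C(v)$ each contain strictly fewer heavy edges than $\ct$ does (the removed edge $\{u,v\}$ is not in either), and their corresponding heavy-edge substitutes also lie in $F$, so the induction hypothesis supplies internal connectivity of $C(u)$ and $C(v)$ in $(V, F)$, closing the argument.

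With $(V, F)$ shown to be connected and $F \subseteq \ehat$, any spanning tree of $(V, F)$ is a spanning subgraph of $(V, \ehat)$ of weight at most $wt(F) = O(wt(\ct))$. Since $\cthat = MST(\Ghat)$ cannot be heavier than such a spanning tree, the theorem follows.
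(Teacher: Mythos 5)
Your overall framing---exhibit a connected spanning subgraph $F=\ehat_\ell\cup\{f(e): e\in E(\ct)\cap E_\ehh\}$ of $\Ghat$ with $wt(F)=O(wt(\ct))$, then compare $\cthat$ against a spanning tree of $(V,F)$---is sound, and your weight bound is correct; this is essentially an explicit version of the cut-based sketch the paper gives. The gap is in the connectivity induction. Lemma~\ref{lemma:wtbig} places the substitute's endpoints in the components $C(u),C(v)$ of $\ct$ \emph{minus the edge being replaced}, i.e., relative to the full tree. When you recurse into the subtree on $C(u)$ and take a heavy edge $\{x,y\}$ of that subtree, its substitute $\{x',y'\}$ is only guaranteed to have $x'$ and $y'$ in the two components of $\ct$ with $\{x,y\}$ removed; one of those components contains all of $C(v)$, so one endpoint of the substitute may lie outside $C(u)$ entirely. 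Hence the substitutes of the subtree's heavy edges need not cross the subtree's own fundamental cuts, and ``internal connectivity of $C(u)$'' is not something the induction hypothesis (on trees with fewer heavy edges) can deliver. Worse, connecting $x$ to such an $x'$ along the tree would pass through the very edge $\{u,v\}$ you are currently processing, so the recursion is circular rather than well-founded as stated.

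The claim you need is true, and the clean fix is to induct on edge weight and invoke the cycle (bottleneck) property of the MST. Since $d(u,u')\le r_i< d(u,v)$ and $\ct$ is a minimum spanning tree, every tree edge on the $\ct$-path from $u$ to $u'$ (this path stays inside $C(u)$) has weight at most $d(u,u')<d(u,v)$---otherwise exchanging that edge for $\{u,u'\}$ would yield a lighter spanning tree---and similarly for the path from $v$ to $v'$. Now do strong induction on the weight of tree edges: light tree edges have their endpoints connected in $G[\ehat_\ell]\subseteq (V,F)$ by Lemma~\ref{lemma:connectivity}; for a heavy tree edge $e=\{u,v\}$, all tree edges on the paths $u\to u'$ and $v\to v'$ are strictly lighter, so their endpoints are connected in $(V,F)$ by the induction hypothesis, and together with $\{u',v'\}\in F$ this connects $u$ to $v$. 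This shows $(V,F)$, and hence $\Ghat$, is connected, and combined with your weight bound it completes the proof along your route.
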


\subsection{Constant Running Time}
The result of the previous subsection does not require that the underlying metric space $(V, d)$ 
have constant doubling dimension.
Now we assume that $(V, d)$ has constant doubling dimension and in this setting 
we show that Algorithm \textsc{MST-Approximation} can be implemented in \textit{constant} rounds.
Even though the algorithm is described in  a ``sequential'' style in Algorithm \ref{algo:construct},
it is easy to verify that most of the steps can be easily implemented in constant rounds in
the congested clique model.
However, to finish the analysis we need to show:
(i) that \textsc{ComputeMIS} executes in constant rounds, 
(ii) that the $h = O(\log n)$ calls to \textsc{ComputeMIS} in Line 10 can be executed in parallel in
constant rounds, and (iii) that \textsc{MST-Sparse} in Line 13 can be implemented in constant
rounds. 
In the following, we show (iii) by simply showing that $\Ghat$ has linear number of edges.
In the previous section, we have shown (i) and later in this section we show (ii). 

%

We first show $|\ehat_\ell| = O(n)$ in Lemma~\ref{lemma:smallsize} and then argue about heavy edges. 
\begin{lemma}\label{lemma:smallsize}
  $|\ehat_\ell| = O(n)$.
\end{lemma}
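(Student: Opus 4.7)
The plan is to bound $|\hat{E}_\ell|$ by charging edges to their endpoints and then using the growth-bounded property of the metric space $(V,d)$.

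First I would switch the order of summation and count, for each vertex $v \in V$, the number of $u \in S$ such that $\{u,v\} \in \hat{E}_\ell$, i.e.\ the number of MIS members within distance $2d_m/n^2$ of $v$. Summing this over all $v$ gives an upper bound on $|\hat{E}_\ell|$ (each edge is counted once or twice depending on how many of its endpoints lie in $S$), so it suffices to show that this count is $O(1)$ for every $v$.

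Next I would exploit the fact that $S$ is an MIS of the distance-threshold graph $G_r$ for $r = d_m/n^2$: any two distinct members of $S$ are at distance strictly greater than $d_m/n^2$. Fix $v \in V$ and let $S_v = \{u \in S : d(u,v) \le 2d_m/n^2\}$. For any two distinct $u_1, u_2 \in S_v$, the triangle inequality gives $d(u_1,u_2) \le 4d_m/n^2$, while independence gives $d(u_1,u_2) > d_m/n^2$. Hence $S_v$ has aspect ratio $\lambda(S_v) \le 4$, a constant. Applying \aspect\ of $(V,d)$ (which has constant doubling dimension $\rho$), we get $|S_v| \le 2^{\rho \lceil \log_2 4 \rceil} = 2^{2\rho} = O(1)$.

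Finally, summing $|S_v| = O(1)$ over all $n$ choices of $v$ yields $|\hat{E}_\ell| = O(n)$. I do not expect any serious obstacle here; the one thing worth being careful about is the double-counting when an edge has both endpoints in $S$, but since we only need an upper bound this only costs a factor of $2$. The real content of the argument is the packing bound on $S_v$, which follows immediately from the constant aspect ratio together with the growth-bounded property of the metric.
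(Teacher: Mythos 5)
Your proposal is correct and follows essentially the same argument as the paper: the paper phrases the charging step as orienting each edge of $\ehat_\ell$ toward an endpoint in $S$ and bounding the out-degree, but the quantity it bounds (the set of $S$-endpoints within distance $2d_m/n^2$ of a fixed vertex, with aspect ratio at most $4$) is exactly your $S_v$, and both conclude via the growth-bounded property. The orientation versus double-counting framing is only cosmetic.
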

\begin{proof}
  For any edge $\left\{u,v\right\} \in \ehat_\ell$ either $u$ or $v$ or both
  belong to $S$ (by construction). We orient edges such that an edge is directed
  towards the node in $S$. If both end points are in $S$ then we add two
  oppositely directed edges. We prove that the out-degree of a node is bounded
  by a constant.

  \noindent Consider a node $u$. Let $N_o(u)$ be the set of endpoints of all
  outgoing edges of $u$. If $|N_o(u)| < 2$ then we are done, therefore
  consider the case $|N_o(u)| \geq 2$. Consider any two nodes $v_i, v_j \in
  N_o(u)$.  By construction we have, $d(u,v_i) \leq 2\cdot d_m/n^2$
  and $d(u,v_j) \leq 2\cdot d_m/n^2$. Therefore by the triangle inequality,
  $d(v_i, v_j) \leq 4\cdot d_m/n^2$. Also, by the definition of orientation $v_i, v_j
  \in S$ and therefore by the definition of $S$ we have, $d(v_i, v_j) > d_m/n^2$.
  Hence the aspect ratio of $N_o(u)$ is at most $4$.
  By \aspect, we have $|N_o(u)| = O(1)$. Hence, $|\ehat_\ell| = O(n)$.
\end{proof}

Now we show $|\ehat_\ehh| = O(n)$. 
We first show in the following lemma two useful properties of vertex-neighborhoods in the graph induced by $\ehat_i$.

\begin{lemma} \label{lemma:neighborhood}
  For each $u \in V_i$, (i) $|N_i(u)| \leq c_3$ where $c_3={c_2}^{O(\rho)}$  and (ii)
  $N_i(u) \cup \{u\}$ induces a clique in $G[E_j]$ for all $i > 0$ and $ j \geq 
  i + \delta$ where $\delta = \left\lceil \frac{\log 2c_2}{\log c_1} 
  \right\rceil$. 
\end{lemma}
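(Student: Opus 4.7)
The plan is to bound the diameter of $N_i(u) \cup \{u\}$ (in the metric) by applying the triangle inequality, then combine this with independence in $G[E_i]$ to control the aspect ratio. Both parts follow from essentially the same geometric observation.

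First I would establish the key distance bounds for any pair $v, w \in N_i(u) \cup \{u\}$. By the definition of $\hat{E}_i$, every vertex $v \in N_i(u)$ satisfies $d(u, v) \le c_2 \cdot r_i$. Hence, by the triangle inequality, any two such vertices $v, w \in N_i(u)$ satisfy $d(v, w) \le 2 c_2 \cdot r_i$; and of course $d(u, v) \le c_2 \cdot r_i \le 2c_2 \cdot r_i$ as well. On the other hand, since $V_i$ is an MIS of $G[E_i]$, any two distinct vertices $v, w \in V_i$ must satisfy $d(v, w) > r_i$; in particular this holds for any two distinct members of $N_i(u) \cup \{u\} \subseteq V_i$.

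For part (i), the above bounds show that the set $Y := N_i(u) \cup \{u\}$ has aspect ratio at most $\lambda(Y) \le 2 c_2 \cdot r_i / r_i = 2 c_2$. Applying the growth-bounded property of the metric space (recalled in Section~\ref{subsection:techPrelim}) gives
\[
|Y| \le 2^{\rho \cdot \lceil \log_2(2c_2) \rceil} = c_2^{O(\rho)},
\]
so $|N_i(u)| \le c_3$ with $c_3 = c_2^{O(\rho)}$, as required.

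For part (ii), I need to show that every pair of distinct vertices in $N_i(u) \cup \{u\}$ is adjacent in $G[E_j]$, i.e., at distance at most $r_j$, for every $j \ge i + \delta$. Using the distance bound from the first paragraph, any two such vertices are at metric distance at most $2 c_2 \cdot r_i$. Since $r_j = c_1^{j-i} \cdot r_i$, it suffices to verify that $c_1^{j-i} \ge 2 c_2$, which holds precisely when $j - i \ge \log(2 c_2)/\log c_1$. Taking $\delta = \lceil \log(2c_2)/\log c_1 \rceil$ guarantees this inequality for all $j \ge i + \delta$, completing the proof.

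No step here is really hard; the only thing to watch is to remember that $N_i(u) \subseteq V_i$ so independence in $G[E_i]$ may be invoked to enforce pairwise separation $> r_i$, which is what lets the aspect-ratio/doubling bound kick in. Everything else is a direct application of the triangle inequality and the definitions of $r_i$ and $\hat{E}_i$.
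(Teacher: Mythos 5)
Your proposal is correct and follows essentially the same argument as the paper: both bound pairwise distances in $N_i(u) \cup \{u\}$ by $2c_2 \cdot r_i$ via the triangle inequality, use the independence of $V_i$ in $G[E_i]$ for the lower separation and hence a constant aspect ratio giving part (i), and compare $2c_2 \cdot r_i$ with $r_{i+\delta} = c_1^{\delta} \cdot r_i$ for part (ii). Your uniform treatment of all pairs (including those involving $u$) is a minor streamlining of the paper's case analysis on $|N_i(u)|$, but the substance is identical.
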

\begin{proof}
  We first show that the aspect ratio of $N_i(u)$ is bounded by $2c_2$.
  This follows from two facts:
  (a) any two points in $N_i(u)$ are at least distance $r_i$ apart,
  and (b) any point in  $N_i(u)$ is at distance at most $c_2\cdot r_i$ from $u$
  and therefore, by using the triangle inequality, any two points in $N_i(u)$ are at
  most $2c_2\cdot r_i$ apart. Then using the bound from
  \aspect~we obtain the result claimed in part (i).

  Now we show part (ii) of the claim.
  If $|N_i(u)| = 0$ then we are done. 
  If $|N_i(u)| = 1 $ then let $v \in N_i(u)$. 
  This implies $d(u,v) \leq c_2\cdot r_i \ < c_1^\delta \cdot r_i = r_{i+\delta}$ 
  which implies $\left\{u,v\right\} \in E_j, j\geq i+\delta$.

  \noindent Now assume $|N_i(u)| > 1$. 
  Consider any two distinct vertices $v, w \in N_i(u)$. 
  Since $\left\{u,v\right\}, \{u,w\} \in \ehat_i$ we have 
  $d(u,v) \leq c_2\cdot r_i$ and $d(u,w) \leq c_2\cdot r_i$.
  By the triangle inequality, $ d(v,w) \leq 2c_2\cdot r_i \leq
  c_1^\delta \cdot r_i =c_{i+\delta}$.
  Therefore $\{v,w\} \in E_{i+\delta}$ and hence we have
  $\{v,w\} \in E_j, \mbox{ for all } j \geq i + \delta$.
\end{proof}

The implication of the above result is that $|\ehat_i|$ is linear in size.
Since we use $O(\log n)$ layers in the algorithm, it immediately follows that $|\ehat_\ehh|$ is $O(n\log n)$.
However, part (ii) of the above result implies that only one of the nodes in $N_i(u)$ will be present in 
$V_j$, $j\geq i+\delta$ since $V_j$ is an independent set of $G[E_j]$. 
This helps us show the sharper bound of $|\ehat_\ehh| = O(n)$ in the following.

Without loss of generality assume that $h$ is a multiple of $\delta$ (if not, add at most $\delta-1$
empty layers $\ehat_{h+1}, \ehat_{h+2},\ldots$ to ensure that this is the case).
Let 
$$\beta(j) = \bigcup_{i=(j-1)\delta + 1}^{j\delta} \ehat_i\qquad\mbox{ for }j= 1, 2, \ldots, \frac{h}{\delta}$$ 
be a partition of the layers $\ehat_i$ into \textit{bands} of $\delta$ consecutive layers.
Let $\ehat_{odd} = \cup_{j: odd} \beta(j)$ and $\ehat_{even} = \cup_{j: even} \beta(j)$.

\begin{lemma}\label{lemma:layerlinear}
  $|\ehat_{odd}| = O(n)$, $|\ehat_{even}| = O(n)$ and therefore $|\ehat| = O(n)$.
\end{lemma}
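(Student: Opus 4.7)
The plan is to prove $|\ehat_{odd}| = O(n)$; the argument for $|\ehat_{even}|$ is identical, and then $|\ehat| = |\ehat_{odd}| + |\ehat_{even}| = O(n)$ follows at once.

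The central per-layer estimate I will establish is $|\ehat_i| \leq c_3 \cdot |V_i \setminus V_{i+\delta}|$. Take any edge $\{u,v\} \in \ehat_i$. By construction $d(u,v) \leq c_2 r_i$, and the definition $\delta = \lceil \log(2c_2)/\log c_1\rceil$ gives $r_{i+\delta} = c_1^\delta r_i \geq 2c_2 r_i > d(u,v)$, so $\{u,v\}$ is an edge of $G[E_{i+\delta}]$. The independence of the MIS $V_{i+\delta}$ then forces at least one of $u,v$ to lie in $V_i \setminus V_{i+\delta}$; I will charge the edge to this endpoint, and the degree bound $|N_i(u)| \leq c_3$ from Lemma~\ref{lemma:neighborhood}(i) produces the claimed inequality.

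Summing over odd-band layers and interchanging the summations yields $|\ehat_{odd}| \leq c_3 \sum_{v \in V} T(v)$, where $T(v) = |\{i \in \text{odd bands}: v \in V_i,\ v\notin V_{i+\delta}\}|$. The clean path forward is to ensure that $V_1 \supseteq V_2 \supseteq \cdots \supseteq V_h$ by instrumenting \textsc{ComputeMIS} with a deterministic tie-breaker (so that a vertex included at level $i$ is automatically included at every lower level); under this nesting, $L(v) := \{i : v \in V_i\}$ is a prefix $\{1,\ldots,L^*(v)\}$, so $v$ contributes to $T(v)$ only for $L^*(v)-\delta < i \leq L^*(v)$, giving $T(v) \leq \delta = O(1)$ and $|\ehat_{odd}| \leq c_3 \delta n = O(n)$.

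The main obstacle is securing this nesting, since \textsc{ComputeMIS} is randomized and the $V_i$'s produced by independent invocations need not nest. If nesting cannot be enforced algorithmically, the fallback is to invoke Lemma~\ref{lemma:neighborhood}(ii) directly: the $(\delta+1)$-gap between successive odd bands ensures the clique property applies across bands, so the cluster $N_i(u) \cup \{u\}$ contracts to at most one vertex in each $V_{i'}$ with $i'$ in the next odd band. A doubling-dimension preimage argument then shows that each representative in the next odd band is charged by $O(1)$ clusters from the current band, and combined with the injective map $V_{i+\delta} \hookrightarrow V_i$ (via closest-point lookup, valid because $c_1^\delta > 2$) we can telescope the per-band edge counts into $O(|V_1|) = O(n)$.
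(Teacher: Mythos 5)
Your per-layer charge is sound: for $\{u,v\}\in\ehat_i$ we have $d(u,v)\le c_2 r_i < r_{i+\delta}$, so independence of $V_{i+\delta}$ forces an endpoint into $V_i\setminus V_{i+\delta}$, and Lemma~\ref{lemma:neighborhood}(i) caps the charges per vertex at $c_3$. The gap is in the summation step, which is the actual content of the lemma. Your ``clean path'' needs the nesting $V_1\supseteq V_2\supseteq\cdots\supseteq V_h$, and the algorithm does not provide it: the $V_i$ are MISs of the \emph{increasing} threshold graphs $G[E_1]\subseteq G[E_2]\subseteq\cdots$, computed by independent randomized runs of \textsc{LowDimensionalMIS} in parallel, and an MIS of a denser graph is in general not contained in the independently computed MIS of a sparser one; no deterministic tie-breaker inside a single call fixes this. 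A nested family does exist mathematically (compute $V_h$ first and extend downwards), but that is a sequential construction over $h=\Theta(\log n)$ layers, which destroys the constant-round parallel implementation that the whole section is built around --- and in any case the lemma is about the $\ehat$ produced by Algorithm~\ref{algo:construct} as written, so you cannot quietly change the algorithm. Without nesting, a vertex can enter and leave the layer MISs repeatedly, so your quantity $T(v)$ can a priori be $\Theta(\log n)$, and your bound degrades to $O(n\log n)$, which is exactly the easy bound the paper says it must improve upon.

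The fallback sketch names the right ingredients (the clique property of Lemma~\ref{lemma:neighborhood}(ii) across bands, an injective closest-point map $V_{i+\delta}\hookrightarrow V_i$, the growth-bounded property) but never carries out the amortization: controlling $|V_{i+\delta}|$ by $|V_i|$, or the number of clusters charging a representative, does not by itself bound $\sum_j |V(j)|$ over the $\Theta(\log n/\delta)$ bands, since many bands can each involve $\Theta(n)$ vertices of $V$ unless one shows that vertices touched in one band are largely retired from all later bands. That retirement argument is what the paper supplies: it proceeds by induction over odd (resp.\ even) bands on the inequality $|\cup_{j\ge k}\beta(j)|\le C\,|\cup_{j\ge k}V(j)|$, where $V(j)$ is the set of vertices incident to band-$j$ edges, using Lemma~\ref{lemma:neighborhood}(ii) to argue that at least half of $V(k-2)$ never reappears in later bands, so the band's $\le c_3\delta\,|V(k-2)|$ edges can be charged (with $C\ge 2c_3\delta$) to vertices that are then permanently removed from the accounting. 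Your ``telescope the per-band edge counts into $O(|V_1|)$'' is precisely this missing step, asserted rather than proved, so as it stands the proposal does not establish $|\ehat_{odd}|=O(n)$.
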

\begin{proof}
  We prove the claim for $\ehat_{odd}$. The proof is essentially the same for $\ehat_{even}$.
  We aim to prove the following claim by induction on $k$ (for odd $k$): for some constant $C > 0$,
  \begin{equation}
    \label{eqn:indStep}
    \left|\bigcup_{j:odd  \ge k} \beta(j)\right| \le C \cdot \left|\bigcup_{j:odd \ge k} V(j)\right|,
  \end{equation}
  where $V(j)$ is the set of vertices such that every vertex in $V(j)$ has some 
  incident edge in $\beta(j)$. 
  Setting $k = 1$ in the above inequality, we see that $|\ehat_{odd}| = |\cup_{j:odd  \ge k} \beta(j)| = O(n)$.
  To prove the base case, let $k'$ be the largest odd integer less than or equal to $h/\delta$. Then, 
  $\cup_{j :odd\ge k'} \beta(j) = \beta(k')$  and 
  $\cup_{j :odd\ge k'} V(j) = V(k')$.
  Consider a vertex $v \in V(k')$.
  By Lemma \ref{lemma:neighborhood}, there are at most $c_3$ edges incident on $v$ from any layer.
  There are $\delta$ layers in $\beta(k')$ and therefore there are at most $c_3 \delta$ edges
  from $\beta(k')$ incident on any vertex $v \in V(k')$.
  Hence, $|\beta(k')| \le c_3 \delta |V(k')|$.
  Therefore, for any constant $C \ge c_3\delta$, it is the case that
  $|\cup_{j \ge k'} \beta(j)| \le C \cdot |\cup_{j \ge k'} V(j)|$.

  Taking (\ref{eqn:indStep}) to be the inductive hypothesis, let us now consider 
  $|\cup_{j \ge k-2} \beta(j)|$.
  Then,
  \begin{equation}
    \left|\bigcup_{j:odd  \ge k-2} \beta(j)\right|  \le  \left|\bigcup_{j:odd \ge k} \beta(j)\right| + |\beta(k-2)|
    \le  C \cdot \left|\bigcup_{j:odd \ge k} V(j)\right| + c_3\delta\cdot |V(k-2)|.
  \end{equation}
  The second inequality is obtained by applying the inductive hypothesis and the inequality $|\beta(k-2)|
  \le c_3 \delta |V(k-2)|$.
  By Lemma \ref{lemma:neighborhood}, at most half the vertices in $V(k-2)$ appear in $\cup_{j \ge k} V(k)$.
  Therefore, $|V(k-2) \setminus (\cup_{j \ge k} V(j))| \ge |V(k-2)|/2$.
  Hence,
  $$\left|\bigcup_{j:odd \ge k-2} \beta(j)\right|  \le C\cdot \left|\bigcup_{j:odd \ge k} V(j)\right| + 2c_3\delta \cdot \left|V(k-2) \setminus (\bigcup_{j:odd \ge k} V(j))\right|.$$
  Picking $C \ge 2c_3\delta$, we then see that
  $$\left|\bigcup_{j:odd \ge k-2} \beta(j)\right|  \le C \cdot \left(\left|\bigcup_{j:odd \ge k} V(j)\right| + \left|V(k-2) \setminus \left(\bigcup_{j:odd \ge k} V(j)\right)\right|\right) = C\cdot \left|\bigcup_{j:odd \ge k-2} V(j)\right|.$$
  The result follows by induction.
\end{proof}

\subsection{Many MIS Computations in Parallel}

In this section, we argue that 
Algorithm~\ref{algo:lowDimensionalMIS} \textsc{LowDimensionMIS} can be executed
on the $O(\log n)$ different distance threshold graphs in parallel on a congested clique. 
Table~\ref{tab:messages} shows number of messages sent/received per node 
in the execution of Algorithm~\ref{algo:lowDimensionalMIS} and
from this it is easy to see that Line 8 of Phase 2 can be executed as it is using  \lra~in $O(1)$ rounds for all the $O(\log n)$ layers in parallel due to their low communication 
requirements.  
For Lines 4-6 of Phase~2 we do the following load balancing via a 
\textit{designated receiver scheme}: each vertex has to send at most 
$O(n^{1/4}\log n)$ messages in an execution of Phase~2 for a layer. 
Therefore, for $O(\log n)$ layers one node is responsible of sending $O(n^{1/4}\log^2 n)$ messages. 
There are only $\lceil 2 \log n \rceil$ receivers needed for in an execution at a layer. 
For all layers the number of receivers needed are $O(\log^2 n)$. 
Hence we can designate different receivers such that no receiver gets more than $O(n)$ messages in execution of Phase~2 for all layers. 
Similar designated receiver scheme is applied for the execution of Phase 1. 

For parallel execution of Line 9 (\textsc{SequentialMIS}) of Phase 4 for all $O(\log n)$ layers 
we use the following \textit{message encoding scheme}: 
Each vertex $v$ constructs a $O(\log n)$-length bit string specifying 1 at position $\ell$ if $v$ is in MIS for the layer $\ell$ otherwise 0. 
Each vertex $v$ broadcasts this string. 
For a layer $\ell$, each vertex considers only $\ell^{th}$ bit of this 
message. 

\begin{table}[thb]
  \caption{Number of messages sent/received per node in the execution of
    Algorithm~\ref{algo:lowDimensionalMIS}\label{tab:messages}}
    \centering
    \begin{tabular}{|l|l|p{0.13\textwidth}|l|l|l|} \hline
      Phase & Line & Analysis & \parbox[t]{0.21\textwidth}{Number of messages to 
      send per node} &
      \parbox[t]{0.11\textwidth}{Number of receivers} &
      \parbox[t]{0.22\textwidth}{Number of messages to receive per receiver} \\ \hline
      1 & 2-4 & Lemma~\ref{lemma:si} & $O(n^{1/2})$ & 
      $n^{1/2}$ & $O(n)$ \\ \hline
      \multirow{2}{*}{2} & 4-6 & Lemma~\ref{lemma:wimis} & $O(n^{1/4}\log n)$ & $\lceil 2\log n \rceil $ & $O(n)$ \\
      & 8 & Lemma~\ref{lemma:qrule} & $O\left(\poly(\log n)\right)$ &  $n$ & $O\left(\poly(\log n)\right)$ \\ \hline
      3 & - & Thm.~\ref{thm:logstar} & $O(n^{1/2}\poly(\log^*n))$ & $ n $ & $O(n^{1/2}\poly(\log^*n))$ \\ \hline
\multirow{2}{*}{4} & 3 & Lemma~\ref{lemma:constantRounds} & $O(1)$ & 1 & $O(n)$ 
\\ 
 & 9 & Lemma~\ref{lemma:constantRounds} & 1 (1-bit) & $n$ & $n$ \\ \hline 
    \end{tabular}
  \end{table}

\section{Constant-Approximation to MFL}
\label{sec:facilitylocation}

Berns et al.~\cite{berns2012arxiv,berns2012facloc} showed how to compute a constant-factor approximation
to MFL in expected $O(\log\log n)$ rounds. (The algorithm presented in \cite{berns2012facloc}
runs in expected $O(\log\log n \cdot \log^* n)$ rounds, but this was subsequently
improved to expected $O(\log\log n)$ in \cite{berns2012arxiv}.)
A high level description of this algorithm is as follows.
Each node $v$ locally computes a value $r_v \ge 0$ that is a function of its opening cost $f_v$ and
distances to other nodes $\{d(v, w) \mid w \in V\}$.
Nodes with similar $r_v$-values join the same class; more precisely, a node $v$ with $3^k \cdot r_m \le r_v \le 3^{k+1} \cdot r_m$,
joins a class $V_k$.
Here $r_m$ is the minimum $r_u$-value over all nodes $u \in V$.
For nodes in each class $V_k$, we construct a graph $H_k = (V_k, E_k)$, where the edge-set $E_k$ is defined as
$\{\{u, v\} \mid u, v \in V_k, d(u, v) \le r_u + r_v\}$.
In the rest of the algorithm, in order to figure out which nodes to open as facilities,
the algorithm computes a $t$-ruling set on each graph $G_k$.
Analysis in \cite{berns2012arxiv,berns2012facloc} then shows that the solution to facility location produced by
this algorithm is an $O(t)$-approximation.
In \cite{berns2012arxiv} it is shown how to compute a 2-ruling set in expected $O(\log\log n)$ rounds on a
congested clique.
Since the classes $V_k$ form a partition of the nodes, the ruling set computations occur on disjoint sets
of nodes and can proceed in parallel.
This leads to a constant-factor approximation to MFL in expected $O(\log\log n)$ rounds.

The 3-ruling set algorithm and the MIS algorithm in the present paper can replace the slower 2-ruling set 
and this yields the following result.

\begin{theorem}
  There exists a distributed algorithm that computes a constant-approximation to
  the metric facility location problem (w.h.p.) in the
  congested-clique model and which has an expected running time of
  $O(\log \log \log n)$ rounds.
  Additionally, if the input metric space has constant doubling dimension then 
  a constant-approximation can be computed in constant rounds (w.h.p.)
  \label{theorem:FacLocApprox}
\end{theorem}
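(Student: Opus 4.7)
The plan is to adopt the reduction from MFL to the $t$-ruling set problem of Berns et al.~\cite{berns2012arxiv,berns2012facloc} verbatim, and replace only the underlying ruling-set subroutine with the faster algorithms developed in Sections~\ref{sec:ruling} and~\ref{sec:mis_in_doubling}. First, every node $v$ locally computes its parameter $r_v$ from $f_v$ and $\{d(v,w)\}_{w \in V}$; then all $r_v$ are broadcast in one round so that every node knows $r_m = \min_u r_u$ and the class index $k(v)$ for which $v \in V_k$, where $V_k = \{v : 3^k r_m \le r_v < 3^{k+1} r_m\}$. The graphs $H_k = (V_k, E_k)$ with $E_k = \{\{u,v\} : d(u,v) \le r_u + r_v\}$ are then implicitly defined (each node knows its own class and neighbors within its class) and, crucially, the vertex sets $V_k$ form a partition of $V$.

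For the first assertion, I would run the 3-Ruling Set algorithm of Section~\ref{sec:ruling} on each $H_k$ in parallel. Because the $V_k$ are disjoint, every node participates in exactly one instance, so neither the sending nor the receiving loads under \lra exceed those of a single invocation on $n$ nodes; the total round complexity is therefore the expected $O(\log \log \log n)$ bound already established. The output is a 3-ruling set of each $H_k$, and the Berns et al.\ analysis applied with $t=3$ gives a constant-factor approximation for MFL (w.h.p.\ for correctness).

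For the second assertion, I would instead run, in parallel across all classes, the constant-round MIS algorithm \textsc{LowDimensionalMIS} of Section~\ref{sec:mis_in_doubling}. The subtlety here is that $H_k$ is not literally a distance-threshold graph $G_r$: edges depend on the sum $r_u + r_v$. However, within class $V_k$ both $r_u$ and $r_v$ lie in $[3^k r_m, 3^{k+1} r_m)$, so $E_k$ is sandwiched between the edge sets of $G_{2\cdot 3^k r_m}[V_k]$ and $G_{2\cdot 3^{k+1} r_m}[V_k]$; that is, each $H_k$ is ``distance-threshold up to a factor of $3$''. I would verify that every ingredient of Section~\ref{sec:mis_in_doubling} (the degree-partition argument of Lemma~\ref{lemma:si}, the \textsc{SampleAndPrune} analysis, the Schneider--Wattenhofer simulation, and the ball-covering/coloring argument of Lemma~\ref{lemma:correctness}) depends only on the \aspect~for balls of constant aspect ratio, which still gives the same constant-sized neighborhoods when radii vary by at most a factor of $3$. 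Parallelism across the $V_k$ is free because the classes are vertex-disjoint, so the per-node bandwidth accounting of Table~\ref{tab:messages} is unchanged.

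The main obstacle I expect is this sandwiching issue: I need to certify that an MIS computed with the ``effective'' radius of $H_k$ is simultaneously independent in the lower threshold graph and dominating in the upper threshold graph, so that it serves as a genuine MIS (and in particular a 1-ruling set) of $H_k$. The plan is to modify the MIS algorithm slightly so that the distance tests in Phases~1--4 use the exact edge predicate $d(u,v)\le r_u+r_v$ rather than a single threshold, and then re-derive the doubling-dimension volume bounds with the constant factor $3$ absorbed into the constants $\gamma$, $c_1, c_2, c_3$. Once this is done, plugging the resulting $1$-ruling set (an MIS of $H_k$) into the Berns et al.\ scheme with $t=1$ gives a constant-factor MFL approximation in $O(1)$ rounds w.h.p., completing the proof of Theorem~\ref{theorem:FacLocApprox}.
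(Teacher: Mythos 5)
Your proposal matches the paper's own argument: the paper likewise takes the Berns et al.\ reduction as a black box, runs the Section~\ref{sec:ruling} 3-ruling set algorithm (respectively the Section~\ref{sec:mis_in_doubling} MIS algorithm) in parallel on the vertex-disjoint classes $V_k$, and concludes the $O(\log\log\log n)$-round (respectively constant-round) constant-factor approximation. Your extra care about the graphs $H_k$ being only ``sandwiched'' between two distance-threshold graphs (and re-deriving the doubling-dimension constants accordingly) is a legitimate detail that the paper glosses over, and your resolution of it is sound.
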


\section{Conclusion}

In a recent paper, Drucker et al.~\cite{DruckerKuhnOshmanPODC2014} show that the congested clique can simulate powerful classes of bounded-depth circuits, implying that even slightly super-constant lower bounds for the congested clique would give new lower bounds in circuit complexity. This provides some explanation for why there are no
non-trivial lower bounds in the congested clique model. One could view this result as providing motivation
for proving even stronger upper bounds. As shown in this paper, it is possible to design algorithms
that run significantly faster than $\Theta(\log\log n)$ rounds for well-known problems.
Continuing this program, we are interested in designing algorithms running in $o(\log\log n)$ rounds for
MST and related problems such as connectivity verification.

\subsubsection*{Acknowledgments.}
We would like to thank reviewers of DISC 2014 for their careful reading and thoughtful comments. 

\bibliography{disc}

\begin{thebibliography}{10}

\bibitem{Balinski66}
M.~L. Balinski.
\newblock On finding integer solutions to linear programs.
\newblock In {\em Proceedings of IBM Scientific Computing Symposium on
  Combinatorial Problems}, pages 225--248, 1966.

\bibitem{berns2012facloc}
Andrew Berns, James Hegeman, and Sriram~V. Pemmaraju.
\newblock {Super-Fast Distributed Algorithms for Metric Facility Location.}
\newblock In Artur Czumaj, Kurt Mehlhorn, Andrew~M. Pitts, and Roger
  Wattenhofer, editors, {\em {ICALP (2) }}, volume 7392 of {\em {Lecture Notes
  in Computer Science}}, pages 428--439. Springer, 2012.

\bibitem{berns2012arxiv}
Andrew Berns, James Hegeman, and Sriram~V. Pemmaraju.
\newblock {Super-Fast Distributed Algorithms for Metric Facility Location}.
\newblock {\em CoRR}, abs/1308.2473, August 2013.

\bibitem{CNWBook}
G.~Cornuejols, G.~Nemhouser, and L.~Wolsey.
\newblock {\em {Discrete Location Theory}}.
\newblock Wiley, 1990.

\bibitem{damian2006Spanner}
Mirela Damian, Saurav Pandit, and Sriram~V. Pemmaraju.
\newblock {Distributed Spanner Construction in Doubling Metric Spaces.}
\newblock In {\em {International Conference on Principles of Distributed
  Systems}}, volume 4305 of {\em {OPODIS}}, pages 157--171. Springer, 2006.

\bibitem{DeanGhemavat}
Jeffrey Dean and Sanjay Ghemawat.
\newblock Mapreduce: Simplified data processing on large clusters.
\newblock {\em Commun. ACM}, 51(1):107--113, January 2008.

\bibitem{DruckerKuhnOshmanPODC2014}
Andrew Drucker, Fabian Kuhn, and Rotem Oshman.
\newblock {The communication complexity of distributed task allocation.}
\newblock In Darek Kowalski and Alessandro Panconesi, editors, {\em {PODC}},
  pages 67--76. ACM, 2012.

\bibitem{dubhashiBook}
Devdatt~P. Dubhashi and Alessandro Panconesi.
\newblock {\em Concentration of Measure for the Analysis of Randomized
  Algorithms}.
\newblock Cambridge University Press, New York, NY, USA, 2012.

\bibitem{EHK77}
M.~V. Eede, P.~Hansen, and L.~Kaufman.
\newblock {A plant and warehouse location problem}.
\newblock {\em Operational Research Quarterly}, 28(3):547-----554, 1977.

\bibitem{FrankBook}
C.~Frank.
\newblock {\em {Algorithms for Sensor and Ad Hoc Networks}}.
\newblock Springer, 2007.

\bibitem{GehweilerSPAA2006}
Joachim Gehweiler, Christiane Lammersen, and Christian Sohler.
\newblock {A Distributed O(1)-approximation Algorithm for the Uniform Facility
  Location Problem}.
\newblock In {\em Proceedings of the Eighteenth Annual ACM Symposium on
  Parallelism in Algorithms and Architectures}, SPAA '06, pages 237--243, New
  York, NY, USA, 2006. ACM.

\bibitem{GhaffariPODC17}
Mohsen Ghaffari.
\newblock Distributed {MIS} via all-to-all communication.
\newblock In {\em Proceedings of the {ACM} Symposium on Principles of
  Distributed Computing, {PODC} 2017, Washington, DC, USA, July 25-27, 2017},
  pages 141--149, 2017.

\bibitem{GhaffariGKMRPODC18}
Mohsen Ghaffari, Themis Gouleakis, Christian Konrad, Slobodan Mitrovic, and
  Ronitt Rubinfeld.
\newblock Improved massively parallel computation algorithms for mis, matching,
  and vertex cover.
\newblock In {\em Proceedings of the 2018 {ACM} Symposium on Principles of
  Distributed Computing, {PODC} 2018, Egham, United Kingdom, July 23-27, 2018},
  pages 129--138, 2018.

\bibitem{HegemanPemmarajuDISC2013}
James Hegeman and Sriram~V. Pemmaraju.
\newblock A super-fast distributed algorithm for bipartite metric facility
  location.
\newblock In {\em DISC}, pages 522--536, 2013.

\bibitem{HegemanPemmarajuSIROCCO2014}
James~W. Hegeman and Sriram~V. Pemmaraju.
\newblock Lessons from the congested clique applied to mapreduce.
\newblock In {\em Structural Information and Communication Complexity - 21st
  International Colloquium, {SIROCCO} 2014, Takayama, Japan, July 23-25, 2014.
  Proceedings}, pages 149--164, 2014.

\bibitem{KarloffSuriVassilvitskii}
Howard Karloff, Siddharth Suri, and Sergei Vassilvitskii.
\newblock A model of computation for mapreduce.
\newblock In {\em Proceedings of the Twenty-first Annual ACM-SIAM Symposium on
  Discrete Algorithms}, SODA '10, pages 938--948, Philadelphia, PA, USA, 2010.
  Society for Industrial and Applied Mathematics.

\bibitem{KlauckArxiv2013}
Hartmut Klauck, Danupon Nanongkai, Gopal Pandurangan, and Peter Robinson.
\newblock {The Distributed Complexity of Large-scale Graph Processing}.
\newblock {\em CoRR}, abs/1311.6209, 2013.
\newblock "To appear in SODA 2015.

\bibitem{KuhnMoscibrodaWattenhoferPODC2005}
Fabian Kuhn, Thomas Moscibroda, and Rogert Wattenhofer.
\newblock {On the Locality of Bounded Growth}.
\newblock In {\em Proceedings of the Twenty-fourth Annual ACM Symposium on
  Principles of Distributed Computing}, PODC '05, pages 60--68, New York, NY,
  USA, 2005. ACM.

\bibitem{lenzen2013routing}
Christoph Lenzen.
\newblock {Optimal Deterministic Routing and Sorting on the Congested Clique}.
\newblock In {\em Proceedings of the 2013 ACM Symposium on Principles of
  Distributed Computing}, PODC '13, pages 42--50, 2013.

\bibitem{LenzenWattenhoferBAPODC2010}
Christoph Lenzen and Roger Wattenhofer.
\newblock Brief announcement: exponential speed-up of local algorithms using
  non-local communication.
\newblock In {\em Proceedings of the 29th Annual {ACM} Symposium on Principles
  of Distributed Computing, {PODC} 2010, Zurich, Switzerland, July 25-28,
  2010}, pages 295--296, 2010.

\bibitem{lotker2006distributed}
Zvi Lotker, Boaz Patt-Shamir, and David Peleg.
\newblock {Distributed MST for Constant Diameter Graphs}.
\newblock {\em Distributed Computing}, 18(6):453--460, 2006.

\bibitem{MalewiczPregelSIGMOD2010}
Grzegorz Malewicz, Matthew~H. Austern, Aart~J.C Bik, James~C. Dehnert, Ilan
  Horn, Naty Leiser, and Grzegorz Czajkowski.
\newblock {Pregel: A System for Large-scale Graph Processing}.
\newblock In {\em Proceedings of the 2010 ACM SIGMOD International Conference
  on Management of Data}, SIGMOD '10, pages 135--146, New York, NY, USA, 2010.
  ACM.

\bibitem{MoscibrodaFLPODC05}
Thomas Moscibroda and Roger Wattenhofer.
\newblock Facility location: distributed approximation.
\newblock In {\em PODC '05: Proceedings of the twenty-fourth annual ACM
  symposium on Principles of distributed computing}, pages 108--117, New York,
  NY, USA, 2005. ACM.

\bibitem{PanditPemmarajuICDCN09}
S.~Pandit and S.~V. Pemmaraju.
\newblock {Finding facilities fast}.
\newblock {\em Distributed Computing and Networking}, pages 11-----24, 2009.

\bibitem{peleg2000distributed}
David Peleg.
\newblock {\em {Distributed Computing: A Locality-Sensitive Approach}},
  volume~5.
\newblock Society for Industrial Mathematics, 2000.

\bibitem{schneider2008logstar}
Johannes Schneider and Roger Wattenhofer.
\newblock {A Log-Star Distributed Maximal Independent Set Algorithm for
  Growth-Bounded Graphs}.
\newblock In {\em {Proceedings of the twenty-seventh ACM symposium on
  Principles of distributed computing}}, pages 35--44. ACM, 2008.

\end{thebibliography}

\end{document}